\newtheorem*{thm}{Theorem}
\newtheorem*{lemma}{Lemma}
\newtheorem*{cor}{Corollary}
\newtheorem*{prop}{Proposition}
\newtheorem*{basicthm}{(Basic) Theorem}
\newtheorem*{basicprop}{Basic Proposition}
\theoremstyle{definition}
\newtheorem*{defi}{Definition}
\newtheorem*{ex}{Example}
\newtheorem*{notat}{Notation}
\newtheorem*{setting}{Setting}
\theoremstyle{remark}
\newtheorem*{note}{Note}
\begin{document}

\title{A more ``complete'' version of the $\pi$-theorem: DRAFT}
\author{Julian Newman,\\ Department of Mathematics,\\ Imperial College\\ London}
\date{2011}
\maketitle

\begin{abstract}
\noindent The traditional Pi-theorem tells us that for any dimensionally invariant relation there exists a full set of independent dimensionless ``Pi groups'' which can be used to nondimensionalise the relation. In this paper, we seek to understand better the structure of dimensionally invariant relations and sets, by giving a complete characterisation of them in terms of independent dimensionless Pi groups. The traditional Pi-theorem only goes part of the way towards achieving such a characterisation. Our characterisation presented here can be viewed as the ``complete Pi-theorem''.
\end{abstract}

\newpage

\section*{0. Introduction}

When we measure physical quantities, the value which we assign is not typically just a real number; rather it is usually a real number multiplied by a ``unit''---that is, a fixed reference value for this particular kind of quantity, a value which is seen as distinct from any ``dimensionless'' real value. All possible values of this ``kind'' of quantity are multiples of this unit (that is, multiples by real, or ``dimensionless'', factors); they are said to have the same ``dimension''. A physical quantity which is not a multiple of this unit is said to be of a different dimension.

$\textrm{}$ \\ It becomes immediately clear from a minimal amount of study of applied mathematics or physics that there should also be a notion of the product of two values, even if neither value is dimensionless (i.e. both are ``dimensional''); and if the two values are dimensional, then the resulting value must be a different ``kind of quantity'' (i.e. of a different dimension) from both of the original quantities, if multiplication is to behave in a similar way as we are used to for standard real numbers. We can refer to the dimension of the result as the product of the dimensions of the two original values. Also, we have a sense of dimensional values being positive, zero or negative; and if a dimensional value is positive, we have a notion of raising this value to a real power.

$\textrm{}$ \\ Typically, we start with a list of ``fundamental dimensions'' such as mass, length and time. We then take these as the basis of a vector space of dimensions, whose addition operator is multiplication of dimensions and whose scalar multiplication operator is raising to scalar (i.e. real) powers. The set of all physical quantities is the set of all values that belong to one of the dimensions in this vector space. The set of all physical quantities that are positive form a vector space under the same operations.

$\textrm{}$ \\ When we perform calculations using some formula that gives the relationship between a list of physical variables, we typically ignore all units in our calculations. This is based on the idea that there is some sense in which the formula can be applied to just the dimensionless numbers which pre-multiply the units, and the answer remains valid regardless of the system of units being used, provided that the system of units is \emph{consistent}.

$\textrm{}$ \\ If we pick any positive value from each of the fundamental dimensions---e.g. kg, ft and min---then we can form a consistent system of units by taking the span of these units. A list of units consisting of one value from each dimension is called a list of fundamentla units. For any system of units that is not equal to a span of fundamental units, there will be a ``clash'' of units; there will be some product of powers of units whose result is \emph{different} in value from the unit for the resulting dimension. For example, length could be measured in cm and time in hr, while the ratio of a length to a time (e.g. a velocity) could be measured in knots. This makes efficient calculations very difficult. From now on, if we talk of a system of units, we will assume that it is consistent.

$\textrm{}$ \\ There is no ``natural'' choice for what system of units to choose. Consequently, we have a notion that all physical relations between physical variables can be characterised by the values which pre-multiply the units, independently of the system of units being used. For example $F = ma$ is a formula that satisfies this property; we can think of it as a relation between dimensionless numbers that pre-multiply the units of force, mass and acceleration, and the result will be valid regardless of the system of units being used. Any physical relation must satisfy this property, otherwise it is not ``treating all systems of units equally'', but instead one can find a physical quantity which is being treated, in at least some sense, as ``special'' among the set of quantities of its dimension.

$\textrm{}$ \\ If some formula does not appear to have this property, then it is because there is a ``hidden dimensional value'' in the formula---but if we include all such dimensional values as physical variables in the formula, then the formula will satisfy this property. As a crude example,

\[ x = \textit{speed of light} \cdot t, \]

\noindent can be regarded as a formula linking the two variables $x$ (distance travelled by a light beam) and $t$ (time for which the beam has been travelling). Suppose that we have a pair of values for $x$ and $t$ which satisfy the formula; if we then change the units of these values while keeping the numbers that pre-multiply the units the same, the resulting values will no longer satisfy the formula. However, this is because there is a constant dimensional value, the speed of light, that the formula ``treats as special''. But if we regard the formula as a formula linking the \emph{three} variables $x$, $t$ and the speed of light $c$ (even though, as far as the \emph{physics} is concerned $c$ is a constant), then the formula does remain invariant under a switch in the units for $x$, $t$ and $c$ (provided the units are consistent).

$\textrm{}$ \\ If a relation between physical variables satisfies this property that, in intuitive terms, all dimensional quantities involved are explicitly present as physical variables, shall be called \emph{dimensionally invariant}. (This term is common among those working in ``measurement theory'', e.g. [1], and related areas, e.g. [2]. Sometimes such relations are just called a ``physical relations'', as in [3]). For example, ``$F=ma$'' is a dimensionally invariant relation between $F$, $m$ and $a$. The $\pi$-theorem tells us that any dimensionally invariant relation between positive physical variables can be ``nondimensionalised'' using products of powers of the original variables, or ``$\pi$ groups''; it furthermore tells us (at most) how many of these $\pi$ groups are needed. (For physical variables that can also take non-positive values, the sgn function might also be needed.)

$\textrm{}$ \\ In this paper, we will show that the $\pi$-theorem is just a weak version of a one-way implication of an ``if and only if'' fact. We will prove the full fact, the ``full $\pi$-theorem'', which gives a clearer insight into the structure of dimensionally invariant relations.

$\textrm{}$ \\ But first we must introduce our terminology and notation, and the important basic facts of dimensional analysis.

\section*{1. Introductory concepts}

A note on terminology: In the context of vector spaces, we shall consider ``linearly independent (ordered) lists'' of vectors, and the ``bases'' shall be ordered lists, rather than unordered sets.

\textrm{} \\ \emph{Just as} $\Box$ \emph{is used to denote the end of a proof, we shall use} $\triangle$ \emph{to denote the end of a Note, Remark, Definition, statement of a Proposition, etc.}

\begin{setting}
Let $X$ be a vector space over a field $\mathbb{F}$. Let $V$ be a subspace of $X$, and let $W$ be the quotient space $X/V$.

$\textrm{}$ \\ Denote the sum of two values $x_1$ and $x_2$ in $X$ as $x_1 x_2$. For a value $\lambda \in \mathbb{F}$ and $x \in X$, denote the product of $\lambda$ with $x$ as $x^{\lambda}$. Denote the additive inverse of an element $x \in X$ as $x^{-1}$. All of these notations shall also apply for operations on elements of $W$. Denote the zero vector of $X$ as $1$. 

$\textrm{}$ \\ Denote the additive identity of $\mathbb{F}$ as $0_{\mathbb{F}}$.

$\textrm{}$ \\ Let $q:X \to W$ be the projection sending an element $x \in X$ onto its coset $Vx \in W$. \hfill $\triangle$
\end{setting}

\noindent In our context, $X$ is the set of all positive physical quantities, $\mathbb{F}$ is $\mathbb{R}$ or $\mathbb{Q}$, $V$ is the set of positive dimensionless numbers, and $W$ is the set of dimensions, where a dimension is viewed as a complete set of positive quantities that are multiples of each other (for example, the dimension ``length'' is the set of all positive length values).

\begin{defi}
A list of values $x_1, \ldots, x_n$ in $X$ is called \emph{consistent} if $q$ is injective on $\textrm{span}(x_1, \ldots, x_n)$---i.e. if
\[ q|_{\textrm{span}(x_1, \ldots, x_n)}: \textrm{span}(x_1, \ldots, x_n) \to \textrm{span} \left( q(x_1), \ldots, q(x_n) \right) \]
is a vector space isomorphism. \hfill $\triangle$
\end{defi}

\noindent Thus, a list of units of measurement is consistent if it gives rise to no ``clashes''. A clash is when one product of powers of the units gives one value, while another product of powers of the units gives another value \emph{of the same dimension}.

\begin{prop}[alternative definition of `consistent']
A list of values $x_1, \ldots, x_n$ is consistent if and only if there exist values $y_1, \ldots, y_m$ such that
\[ q(y_1), \ldots, q(y_m) \textrm{ are linearly independent, and} \]
\[ x_1, \ldots, x_n \in \textrm{span}(y_1, \ldots, y_n). \]
\hfill $\triangle$
\end{prop}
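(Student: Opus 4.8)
The plan is to reduce everything to the elementary linear-algebra fact that a linear map is injective on a subspace if and only if it carries some (equivalently, every) basis of that subspace to a linearly independent list. It may be worth isolating this as a small sublemma, since both directions of the proof use it (in the two opposite forms: ``injective $\Rightarrow$ preserves linear independence'' and ``sends a basis to a linearly independent list $\Rightarrow$ injective''). Throughout I keep the multiplicative notation of the Setting for the operations on $X$ and $W$.

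For the forward implication, I would assume $x_1, \ldots, x_n$ is consistent, i.e.\ $q$ is injective on $U := \textrm{span}(x_1, \ldots, x_n)$. Since $U$ is finite-dimensional, it is spanned by the finite list $x_1, \ldots, x_n$, so I can extract a sublist $y_1, \ldots, y_m$ of $x_1, \ldots, x_n$ that is a basis of $U$. As $q|_U$ is injective and $y_1, \ldots, y_m$ are linearly independent, their images $q(y_1), \ldots, q(y_m)$ are linearly independent in $W$; and by construction every $x_i$ lies in $U = \textrm{span}(y_1, \ldots, y_m)$. This is exactly the list required by the right-hand side (note the statement's ``$\textrm{span}(y_1,\ldots,y_n)$'' should read ``$\textrm{span}(y_1,\ldots,y_m)$'').

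For the reverse implication, I would be given $y_1, \ldots, y_m$ with $q(y_1), \ldots, q(y_m)$ linearly independent and $x_1, \ldots, x_n \in Y := \textrm{span}(y_1, \ldots, y_m)$. First observe that $y_1, \ldots, y_m$ are themselves linearly independent in $X$: any nontrivial relation $y_1^{\lambda_1} \cdots y_m^{\lambda_m} = 1$ would be carried by $q$ to $q(y_1)^{\lambda_1} \cdots q(y_m)^{\lambda_m} = 1$, contradicting linear independence of the $q(y_i)$. Hence $y_1, \ldots, y_m$ is a basis of $Y$, and since $q$ sends this basis to a linearly independent list, $q|_Y$ is injective. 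Finally $\textrm{span}(x_1, \ldots, x_n) \subseteq Y$, so $q$ is injective on $\textrm{span}(x_1, \ldots, x_n)$; that is, $x_1, \ldots, x_n$ is consistent.

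The argument is essentially routine and I do not anticipate a genuine obstacle; the only things needing care are stating the injectivity/linear-independence sublemma cleanly and, in the forward direction, invoking the fact that a finite spanning list can be thinned to a basis that is a sublist of it. If one wanted to avoid even that, one could instead take $y_1, \ldots, y_m$ to be an arbitrary basis of $U$ (not necessarily a sublist), which works equally well since only membership $x_i \in \textrm{span}(y_1, \ldots, y_m)$ is needed.
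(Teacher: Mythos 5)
Your proposal is correct and follows essentially the same route as the paper: in the forward direction both take a basis of $\textrm{span}(x_1,\ldots,x_n)$ and use injectivity of $q$ there to get linear independence of the images, and in the reverse direction both show $q$ is injective on $\textrm{span}(y_1,\ldots,y_m)$ (the paper by comparing coefficients, you by the equivalent basis-to-independent-list sublemma) and then restrict to the subspace $\textrm{span}(x_1,\ldots,x_n)$. You are also right that the statement's ``$\textrm{span}(y_1,\ldots,y_n)$'' is a typo for ``$\textrm{span}(y_1,\ldots,y_m)$''.
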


\noindent Thus, a list of units is consistent if and only if it is contained in the span of a list of values which could be treated as ``fundamental units'' (that is, a list of values whose dimensions are linearly indepedent). The values $\textrm{V}$, $\textrm{A}$, $\Omega$, $\textrm{s}$ and $\textrm{F}$ form a consistent list, because they are in $\textrm{span}(V,A,s)$, and the dimensions $\textrm{[V]}$, $\textrm{[A]}$ and $\textrm{[s]}$ are linearly independent. Indeed, much of basic electronics could be done using $\textrm{V}$, $\textrm{A}$ and $\textrm{s}$ as the fundamental units, rather than the standard $\textrm{kg}$, $\textrm{m}$, $\textrm{A}$ and $\textrm{s}$.

\begin{proof}
``if'': Let $y_1, \ldots, y_m$ be values such that $q(y_1), \ldots, q(y_m)$ are linearly indepedent. First we will show that $y_1, \ldots, y_m$ are consistent: Take two values $s$ and $t$ in $\textrm{span}(y_1, \ldots, y_m)$, such that $q(s)=q(t)$. Express $s$ and $t$ as
\begin{align*}
s &= y_1^{\lambda_1} \ldots y_m^{\lambda_m} \textrm{ and} \\
t &= y_1^{\mu_1} \ldots y_m^{\mu_m}.
\end{align*}
Then
\[ q(s) = \underline{ q(y_1)^{\lambda_1} \ldots q(y_m)^{\lambda_m} = q(y_1)^{\mu_1} \ldots q(y_m)^{\mu_m} } = q(t) \]
and so
\[ \lambda_1 = \mu_1, \lambda_2 = \mu_2, \ \ldots\ , \ \lambda_m = \mu_m. \]
Hence $s = t$.

$\textrm{}$ \\ Hence $y_1, \ldots, y_m$ are consistent. Now take any list of values $x_1, \ldots, x_n \in \textrm{span}(y_1, \ldots, y_n)$. Since $q$ is injective on $\textrm{span}(y_1, \ldots, y_m)$, it is in particular injective on the subset

\[\textrm{span}(x_1, \ldots, x_n) \subset \textrm{span}(y_1, \ldots, y_m).\]

\noindent So $x_1, \ldots, x_n$ are consistent.

$\textrm{}$ \\ ``only if'': Take any consistent list $x_1, \ldots, x_n$. Let $y_1, \ldots, y_m$ be a basis of $\textrm{span}(x_1, \ldots, x_n)$. Then, since $q|_{\textrm{span}(x_1, \ldots, x_n)}$ is an isomorphism, it follows that $q(y_1), \ldots, q(y_m)$ form a basis of $\textrm{span}(q(x_1),\ldots,q(x_n))$ (and hence in particular are linearly independent).
\end{proof}

\begin{note}
It immediately follows from the above proof that in the proposition, we could replace the line
\[ q(y_1), \ldots, q(y_m) \textrm{ are linearly independent} \]
with the stronger statement
\[ q(y_1), \ldots, q(y_m) \textrm{ form a basis of span} \left( q(x_1), \ldots, q(x_n) \right). \]
\hfill $\triangle$
\end{note}

\begin{defi}
For any positive integer $k$, a $k$\emph{-input fixed-coefficient-linear-combination function} (or `FCLCF') is a function
\[ p: X^k \to X \]
for which there exist values $\lambda_1, \ldots, \lambda_k \in \mathbb{F}$ such that
\[ p(x_1, \ldots, x_k) = x_1^{\lambda_1} \ldots x_k^{\lambda_k} \]
for all $x_1, \ldots, x_k \in X$.
We also define the 0-input FCLCF as being the map that sends the empty tuple $()$ onto the zero-vector 1.
\hfill $\triangle$
\end{defi}

\begin{note}
By completely elementary linear algebra, for any FCLCF the coefficients $\lambda_1, \ldots, \lambda_k \in \mathbb{F}$ are unique.
\hfill $\triangle$
\end{note}

\begin{defi}
Let $p$ be a $k$-input FCLCF, with coefficients $\lambda_1, \ldots, \lambda_k$. Then, for any list of elements $w_1,\ldots,w_k$ of $W$, we define
\[ p(w_1,\ldots,w_k) := w_1^{\lambda_1} \ldots w_k^{\lambda_k}. \]
Note that if $x_i \in w_i$ for each $i$ from $1$ to $k$, then
\[ p(x_1, \ldots, x_k) \in p(w_1,\ldots,w_k). \]
\hfill $\triangle$
\end{defi}

\begin{note}
Suppose that $p$ is a $k$-input FCLCF. Then for any $x_1, \ldots, x_k, y_1, \ldots, y_k \in X$, and $\lambda \in \mathbb{F}$,
\begin{align*}
p(x_1 y_1, \ldots, x_k y_k) &= p(x_1, \ldots, x_k) p(y_1, \ldots, y_k), \textrm{ and} \\ 
p(x_1^{\lambda}, \ldots, x_k^{\lambda}) &= \left( p(x_1, \ldots, x_k) \right)^{\lambda}.
\end{align*}
\hfill $\triangle$
\end{note}

\begin{basicprop}
Let $w$ be an element of $W$, and tet $s \in w$.
Then for any $x \in w$ there exists a unique $a \in V$ such that $x = as$.
\hfill $\triangle$
\end{basicprop}

\begin{proof}
Existence: Take $a := x s^{-1}$. Uniqueness: If $x = as$, then $a = x s^{-1}$.
\end{proof}

\begin{defi}
Take any $w_1, \ldots, w_n \in W$. A We shall define a ``relation on $(w_1, \ldots, w_n)$'' as a function
\[ f: w_1 \times \ldots \times w_n \to \{ \textrm{TRUE}, \textrm{FALSE} \} \]
that assigns a binary value ``$\textrm{TRUE}$'' or ``$\textrm{FALSE}$'' to each element of the Cartesian product $w_1 \times \ldots \times w_n$.

$\textrm{}$ \\ An ``$n$-input dimensionless relation'' is a function
\[ f: V^n \to \{ \textrm{TRUE}, \textrm{FALSE} \}. \]
\hfill $\triangle$
\end{defi}

\noindent An example of a relation might be:
\[ f: \mathsf{M} \times \mathsf{M} \mathsf{T}{}^{-2} \times \mathsf{T} \to \{ \textrm{TRUE}, \textrm{FALSE} \} \]
\[ f(m, k, t) = \textrm{TRUE} \iff \frac{t}{2 \pi} \sqrt{\frac{k}{m}} \in \mathbb{Z}_{>0}. \]

\begin{defi}
A relation $f$ on $(w_1, \ldots, w_n)$ is called \emph{dimensionally invariant} if
for all $a_1, \ldots, a_n \in V$ and any pair of consistent lists $(s_1, \ldots, s_n)$ and $(t_1, \ldots, t_n)$ in $w_1 \times \ldots \times w_n$,
\[ f(a_1 s_1, \ldots, a_n s_n) = f(a_1 t_1, \ldots, a_n t_n). \]
\hfill $\triangle$
\end{defi}

\begin{prop}
Let $g$ be an $r$-input dimensionless relation. Let $w_1, \ldots, w_n$ be elements of $W$. Let $\varphi_1, \ldots, \varphi_r$ be $n$-input FCLCFs such that
\[ \varphi_i[w_1 \times \ldots \times w_n] \subset V \textrm{ for all } i \textrm{ from } 1 \textrm{ to } r. \]
Then the relation $f$ on $(w_1, \ldots, w_n)$ defined by
\[ f(x_1, \ldots, x_n) = g \left( \varphi_1(x_1, \ldots, x_n), \ldots, \varphi_r(x_1, \ldots, x_n) \right) \]
is dimensionally invariant. \hfill $\triangle$
\end{prop}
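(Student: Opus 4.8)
The plan is to reduce everything to one observation: on \emph{any} consistent list in $w_1\times\cdots\times w_n$, each $\varphi_i$ takes the value $1$ (the zero vector of $X$), not merely some value in $V$. Granting this, both $f(a_1s_1,\ldots,a_ns_n)$ and $f(a_1t_1,\ldots,a_nt_n)$ collapse to the same expression $g\bigl(\varphi_1(a_1,\ldots,a_n),\ldots,\varphi_r(a_1,\ldots,a_n)\bigr)$, which involves neither the $s_j$ nor the $t_j$, so dimensional invariance is immediate. (Note in passing that this expression is well-formed: each $\varphi_i(a_1,\ldots,a_n)$ is a product of powers of $a_1,\ldots,a_n\in V$, hence lies in the subspace $V$, so it is a legitimate argument for the $r$-input dimensionless relation $g$.)

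So the first step I would take is, for a fixed $i$ and a fixed consistent list $(s_1,\ldots,s_n)\in w_1\times\cdots\times w_n$, to apply the multiplicativity Note for FCLCFs and write
\[ \varphi_i(a_1s_1,\ldots,a_ns_n) = \varphi_i(a_1,\ldots,a_n)\,\varphi_i(s_1,\ldots,s_n). \]
This isolates the entire content of the proposition in the single claim $\varphi_i(s_1,\ldots,s_n)=1$.

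The proof of that claim is the only non-formal step, and I would argue it as follows. Let $\lambda_{i,1},\ldots,\lambda_{i,n}$ be the (unique) coefficients of $\varphi_i$. The element $\varphi_i(s_1,\ldots,s_n)=s_1^{\lambda_{i,1}}\cdots s_n^{\lambda_{i,n}}$ lies in $\textrm{span}(s_1,\ldots,s_n)$, and so of course does the zero vector $1$. Applying $q$ and using $q(s_j)=w_j$ gives $q\bigl(\varphi_i(s_1,\ldots,s_n)\bigr)=w_1^{\lambda_{i,1}}\cdots w_n^{\lambda_{i,n}}=\varphi_i(w_1,\ldots,w_n)$. Now the hypothesis $\varphi_i[w_1\times\cdots\times w_n]\subset V$, evaluated at any single tuple, forces $\varphi_i(w_1,\ldots,w_n)$ to be the zero vector of $W$: indeed for $x_j\in w_j$ we have $q\bigl(\varphi_i(x_1,\ldots,x_n)\bigr)=\varphi_i(w_1,\ldots,w_n)$, while the left-hand side is the zero vector of $W$ because $\varphi_i(x_1,\ldots,x_n)\in V=\ker q$. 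Hence $q\bigl(\varphi_i(s_1,\ldots,s_n)\bigr)=q(1)$, and since $(s_1,\ldots,s_n)$ is consistent, $q$ is injective on $\textrm{span}(s_1,\ldots,s_n)$; therefore $\varphi_i(s_1,\ldots,s_n)=1$, proving the claim.

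Putting the pieces together, $\varphi_i(a_1s_1,\ldots,a_ns_n)=\varphi_i(a_1,\ldots,a_n)$ for every $i$, so $f(a_1s_1,\ldots,a_ns_n)=g\bigl(\varphi_1(a_1,\ldots,a_n),\ldots,\varphi_r(a_1,\ldots,a_n)\bigr)$, and the identical computation applied to $(t_1,\ldots,t_n)$ yields the same value; this is exactly dimensional invariance. I do not expect a genuine obstacle here. The one conceptual point worth flagging is the step in which \emph{consistency} upgrades ``$\varphi_i$ lands in $V$'' to ``$\varphi_i$ is literally the constant $1$'' on consistent tuples --- that is precisely why $f$ is insensitive not just to rescaling each variable but to an arbitrary (consistent) change of the underlying system of units.
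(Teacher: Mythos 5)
Your proposal is correct and follows essentially the same route as the paper: the whole argument rests on showing $\varphi_i(s_1,\ldots,s_n)=1$ for any consistent list, via the injectivity of $q$ on $\textrm{span}(s_1,\ldots,s_n)$, and then using multiplicativity of FCLCFs to collapse $f(a_1s_1,\ldots,a_ns_n)$ to an expression in the $a_j$ alone. The only cosmetic difference is that the paper observes $\varphi_i(s_1,\ldots,s_n)\in V$ directly from the hypothesis (since $(s_1,\ldots,s_n)\in w_1\times\cdots\times w_n$) rather than passing through $q$ and $\varphi_i(w_1,\ldots,w_n)$ as you do.
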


\noindent For example, the above relation
\[ f(m, k, t) = \textrm{TRUE} \iff \frac{t}{2 \pi} \sqrt{\frac{k}{m}} \in \mathbb{Z}_{>0} \]
is dimensionally invariant: $r=1$, $n=3$, $g(x)=\textrm{TRUE} \textrm{ iff } \frac{x}{2 \pi} \in \mathbb{Z}_{>0}$, and $\varphi_1(m,k,t)=m^{\frac{1}{2}}k^{-\frac{1}{2}}t$.

\begin{proof}
Take any $a_1, \ldots, a_n \in V$ and any pair of consistent lists $(s_1, \ldots, s_n)$ and $(t_1, \ldots, t_n)$ in $w_1 \times \ldots \times w_n$. For each $i$ from $1$ to $r$,
\[ \varphi_i(s_1, \ldots, s_n) \in V. \]
Also,
\[ s_1^{0_{\mathbb{F}}} \ldots s_n^{0_{\mathbb{F}}} = 1 \in V. \]
Hence, since $(s_1, \ldots, s_n)$ is consistent,
\[ \varphi_i(s_1, \ldots, s_n) = 1. \]
Similarly,
\[ \varphi_i(t_1, \ldots, t_n) = 1. \]
Hence,
\begin{align*}
  & g \left( \varphi_1(a_1 s_1, \ldots, a_n s_n), \ldots, \varphi_r(a_1 s_1, \ldots, a_n s_n) \right) \\
= & g \left( \varphi_1(a_1, \ldots, a_n), \ldots, \varphi_r(a_1, \ldots, a_n) \right) \\ 
= & g \left( \varphi_1(a_1 t_1, \ldots, a_n t_n), \ldots, \varphi_r(a_1 t_1, \ldots, a_n t_n) \right).
\end{align*}
\end{proof}

\begin{defi}
Let $L_n$ be the vector space of all n-input FCLCFs under the operations:
\begin{align*}
\textrm{(addition) } & p_1p_2:(x_1, \ldots, x_n) \mapsto p_1(x_1, \ldots, x_n)p_2(x_1, \ldots, x_n) \\ 
\textrm{(multiplication) } & p^{\lambda}:(x_1, \ldots, x_n) \mapsto p(x_1, \ldots, x_n)^{\lambda}.
\end{align*}
\hfill $\triangle$
\end{defi}

\begin{defi}
Take any elements $w_1, \ldots, w_n$ of $W$. We define $K[w_1,\ldots,w_n] \subset L_n$ as the set of $n$-input FCLCFs $p$ such that
\[ p[w_1 \times \ldots \times w_n] \subset V. \]
(In other words, $K[w_1,\ldots,w_n]$ is the set of $n$-input FCLCFs $p$ such that
\[ p(w_1, \ldots, w_n) = \mathbf{0}_W = V.) \]
\hfill $\triangle$
\end{defi}

\begin{thm}
Let $w_1, \ldots, w_n$ be elements of $W$, and let $m$ be the vector space dimension of $\textrm{span}(w_1, \ldots, w_n)$. Then $K[w_1,\ldots,w_n]$ is an $(n-m)$-dimensional subspace of $L_n$.
\hfill $\triangle$
\end{thm}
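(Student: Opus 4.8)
The plan is to recognise $L_n$ as (a copy of) $\mathbb{F}^n$ via the coefficient map, to exhibit $K[w_1,\ldots,w_n]$ as the kernel of a single linear map $L_n\to W$, and then to read off the dimension from the rank--nullity theorem.

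First I would make the coefficient bookkeeping precise. By the Note on uniqueness of coefficients, each $p\in L_n$ determines a unique tuple $(\lambda_1,\ldots,\lambda_n)\in\mathbb{F}^n$ with $p(x_1,\ldots,x_n)=x_1^{\lambda_1}\cdots x_n^{\lambda_n}$, and conversely every tuple arises this way; moreover, by the definitions of the operations on $L_n$, the product $pp'$ of two FCLCFs has coefficient tuple the sum of the two tuples, and $p^\lambda$ has coefficient tuple $\lambda$ times that of $p$. Hence the coefficient map $L_n\to\mathbb{F}^n$ is a vector space isomorphism, and in particular $\dim L_n=n$. (For the degenerate case $n=0$ one notes $L_0=\{1\}$, $m=0$, and the statement reads $\dim\{1\}=0$.)

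Next I would introduce the evaluation map $\Phi:L_n\to W$, $\Phi(p):=p(w_1,\ldots,w_n)=w_1^{\lambda_1}\cdots w_n^{\lambda_n}$, using the extension of FCLCFs to $W$-valued inputs defined earlier. Writing $W$ multiplicatively as in the Setting: if $p,p'$ have coefficient tuples $(\lambda_i)$, $(\mu_i)$, then $pp'$ has tuple $(\lambda_i+\mu_i)$, so $\Phi(pp')=\prod_i w_i^{\lambda_i+\mu_i}=\bigl(\prod_i w_i^{\lambda_i}\bigr)\bigl(\prod_i w_i^{\mu_i}\bigr)=\Phi(p)\Phi(p')$, and similarly $\Phi(p^\lambda)=\Phi(p)^\lambda$; thus $\Phi$ is linear. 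By the very definition of $K[w_1,\ldots,w_n]$ (namely $p(w_1,\ldots,w_n)=\mathbf{0}_W$), we have $\ker\Phi=K[w_1,\ldots,w_n]$, so in particular $K[w_1,\ldots,w_n]$ is a subspace of $L_n$. Finally $\textrm{im}\,\Phi=\{\,w_1^{\lambda_1}\cdots w_n^{\lambda_n}:\lambda_i\in\mathbb{F}\,\}=\textrm{span}(w_1,\ldots,w_n)$, which has dimension $m$ by hypothesis, so rank--nullity gives $n=\dim L_n=\dim K[w_1,\ldots,w_n]+\dim\textrm{im}\,\Phi=\dim K[w_1,\ldots,w_n]+m$, whence $\dim K[w_1,\ldots,w_n]=n-m$.

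I do not expect a genuine obstacle here: the content is entirely the rank--nullity theorem applied to $\Phi$. The only points requiring care are the verifications that $\Phi$ is well defined and linear (i.e. that passing to $W$-valued inputs is compatible with the vector space operations on $L_n$ and $W$) and the elementary fact, already recorded in the Note, that the coefficients of an FCLCF are unique so that $\dim L_n=n$. Nothing depends on $W$ being finite-dimensional, since the domain $L_n$ is finite-dimensional; and since dimension theory and rank--nullity hold over an arbitrary field, the argument is unchanged whether $\mathbb{F}=\mathbb{R}$ or $\mathbb{F}=\mathbb{Q}$.
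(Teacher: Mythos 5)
Your proposal is correct and is essentially the paper's own argument: the paper likewise defines the evaluation map $T[w_1,\ldots,w_n]:L_n\to W$, $p\mapsto p(w_1,\ldots,w_n)$, identifies its kernel as $K[w_1,\ldots,w_n]$ and its image as $\textrm{span}(w_1,\ldots,w_n)$, and applies rank--nullity. You merely supply some details the paper leaves implicit (linearity of the evaluation map, and the fact that $\dim L_n=n$, which the paper records as a separate proposition).
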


\begin{proof}
For any $w_1, \ldots, w_n \in W$, define
\[ T[w_1, \ldots, w_n]: L_n \to W \]
\[ T[w_1, \ldots, w_n](p) = p(w_1, \ldots, w_n) \textrm{ for all } p \in L_n. \]
Clearly $T[w_1, \ldots, w_n]$ is linear. The image $T[w_1, \ldots, w_n][L_n]$ is $\textrm{span}(w_1, \ldots, w_n)$. The kernel is $K[w_1,\ldots,w_n]$.
Hence, by the rank-nullity theorem,
\[ \textrm{dim}K[w_1, \ldots, w_n] = n-m. \]
\end{proof}

\begin{note}
Let $w_1, \ldots, w_n$ be elements of $W$, and let $m$ be the vector space dimension of $\textrm{span}(w_1, \ldots, w_n)$. Then there exist $n$-input FCLCFs $\pi_1, \ldots, \pi_{n-m}$ such that: \\

for any relation on $(w_1, \ldots, w_n)$ of the form
\[ f(x_1, \ldots, x_n) = g \left( \varphi_1(x_1, \ldots, x_n), \ldots, \varphi_r(x_1, \ldots, x_n) \right), \]

we can find an $(n-m)$-input dimensionless relation $g'$ such that
\[g \left( \varphi_1(x_1,\ldots,x_n), \ldots, \varphi_r(x_1,\ldots,x_n) \right) = g' \left( \pi_1(x_1,\ldots,x_n), \ldots, \pi_{n-m}(x_1,\ldots,x_n) \right). \]
The (more elementary) proof of this fact is as follows:
Let $\pi_1, \ldots, \pi_{n-m}$ be a basis of $K[w_1, \ldots, w_n]$.
Now take any $r$-input dimensionless relation $g$ and any FCLCFs $\varphi_1, \ldots, \varphi_r \in K[w_1, \ldots, w_n]$.
For each $i$ from $1$ to $r$, let $\psi_i$ be the $r$-input fixed-coefficient-linear-combination function such that
\[ \phi_i(.) = \psi_i \left( \pi_1(.), \ldots, \pi_{n-m}(.) \right). \]
Define an $(n-m)$-input dimensionless relation $g'$ by
\[ g'(v_1, \ldots, v_{n-m}) = g \left( \psi_1(v_1, \ldots, v_{n-m}), \ldots, \psi_r(v_1, \ldots, v_{n-m}) \right). \]
Then
\[g \left( \varphi_1(x_1,\ldots,x_n), \ldots, \varphi_r(x_1,\ldots,x_n) \right) = g' \left( \pi_1(x_1,\ldots,x_n), \ldots, \pi_{n-m}(x_1,\ldots,x_n) \right)\! . \textit{ Q\!.E.D.} \]

$\textrm{}$ \\ The above statement should not be confused with the $\pi$-theorem itself. One of the uses of nondimensionalisation is to reduce the number of parameters involved. If a relation has already been expressed in a nondimensionalised form, then the above statement may provide a yet more ``efficient'' nondimensionalisation. Nonetheless, we have given it the status of a ``Note'' rather than a ``Theorem'', because it is merely an \emph{example} of the fact that $\pi$-theorem is true. (In fact, in and of itself, the above statement has nothing to do with dimensional invariance.) A dimensionally invariant relation may happen to be expressed \emph{not} in a nondimensionalised form, but the $\pi$-theorem will tell us that it can still be nondimensionalised, and with the same ``efficiency'' as in the above.
\hfill $\triangle$
\end{note}

\begin{defi}
For each $i$ from $1$ to $n$, let the FCLCF $P_i \in L_n$ be the projection onto the $i$-th input,
\[ P_i : (x_1,\ldots,x_n) \mapsto x_i. \]
(If we were to be pedantic, we should specify, in the notation for these projections, the number of inputs---e.g. by writing $P_i^{(n)}$ for the $i$-th $n$-input projection. But just as we typically omit the $^{(n)}$ in the symbol $I^{(n)}$ for the $n \times n$ identity matrix, so we shall usually omit the $^{(n)}$ here.)
\hfill $\triangle$
\end{defi}

\begin{prop}
$L_n$ is $n$-dimensional, with the projections $P_1,\ldots,P_n$ forming a basis.
\hfill $\triangle$
\end{prop}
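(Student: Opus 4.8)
The plan is to show directly that the list of projections $(P_1, \ldots, P_n)$ spans $L_n$ and is linearly independent, thereby realising the "coordinate" identification of an FCLCF with its coefficient list. First I would check that each $P_i$ genuinely belongs to $L_n$: by the definition of FCLCF, $P_i$ is the $n$-input FCLCF whose coefficient list $(\lambda_1,\ldots,\lambda_n)$ has $\lambda_i = 1$ and $\lambda_j = 0_{\mathbb{F}}$ for $j \neq i$ (the $i$-th standard basis vector of $\mathbb{F}^n$). So $(P_1, \ldots, P_n)$ is a legitimate list of $n$ vectors in $L_n$, and it remains to verify the two basis properties.

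For spanning, I would unwind the definitions of the vector space operations on $L_n$: ``addition'' of FCLCFs is pointwise multiplication of outputs in $X$, and ``scalar multiplication by $\lambda$'' is raising outputs to the power $\lambda$. Consequently, for any $\lambda_1, \ldots, \lambda_n \in \mathbb{F}$, the element $P_1^{\lambda_1} \ldots P_n^{\lambda_n}$ of $L_n$ is precisely the map $(x_1, \ldots, x_n) \mapsto P_1(x_1,\ldots,x_n)^{\lambda_1} \ldots P_n(x_1,\ldots,x_n)^{\lambda_n} = x_1^{\lambda_1} \ldots x_n^{\lambda_n}$. Now given an arbitrary $p \in L_n$, the definition of FCLCF supplies coefficients $\lambda_1, \ldots, \lambda_n \in \mathbb{F}$ with $p(x_1, \ldots, x_n) = x_1^{\lambda_1} \ldots x_n^{\lambda_n}$ for all inputs, so $p = P_1^{\lambda_1} \ldots P_n^{\lambda_n} \in \textrm{span}(P_1, \ldots, P_n)$.

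For linear independence, I would note that the zero vector of $L_n$ is the constant map $(x_1, \ldots, x_n) \mapsto 1$, which is the $n$-input FCLCF with coefficient list $(0_{\mathbb{F}}, \ldots, 0_{\mathbb{F}})$. Suppose $P_1^{\lambda_1} \ldots P_n^{\lambda_n}$ equals this zero vector for some $\lambda_1, \ldots, \lambda_n \in \mathbb{F}$. By the computation in the previous paragraph, $P_1^{\lambda_1} \ldots P_n^{\lambda_n}$ is the FCLCF with coefficient list $(\lambda_1, \ldots, \lambda_n)$; by the uniqueness of FCLCF coefficients (the Note following the definition of FCLCF), $(\lambda_1, \ldots, \lambda_n) = (0_{\mathbb{F}}, \ldots, 0_{\mathbb{F}})$. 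Hence $(P_1, \ldots, P_n)$ is linearly independent, and therefore a basis, so $\textrm{dim}\,L_n = n$.

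I do not anticipate any genuine obstacle here: the one point requiring a moment's care is purely notational, namely translating between the multiplicative/exponential notation in which the $L_n$ operations are written and the usual additive language of ``linear combination'' and ``zero vector'', and correctly identifying the zero vector of $L_n$ with the constant map $1$. Once that is done, both halves of the argument reduce immediately to the already-established uniqueness of FCLCF coefficients.
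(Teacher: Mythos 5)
Your proof is correct and is precisely the elementary argument the paper has in mind (its own proof consists only of the word ``Easy''): identify each $P_i$ with the standard basis coefficient vector, use the definition of the $L_n$ operations to show spanning, and invoke the uniqueness of FCLCF coefficients for independence. No issues.
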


\begin{proof}
Easy.
\end{proof}

\begin{defi}
Let $x_1, \ldots, x_n$ be a linearly independent list of elements of $X$, and let $y \in \textrm{span}(x_1, \ldots, x_n)$. Then we define
\[ p[x_1, \ldots, x_n; y] \]
as the unique $n$-input FCLCF such that
\[ p[x_1, \ldots, x_n; y](x_1, \ldots, x_n) = y. \]
We will also use the same notation for elements of $W$ rather than $X$.
\hfill $\triangle$
\end{defi}

\begin{lemma}
Let $w_1, \ldots, w_n$ be elements of $W$. Let $\beta_1, \ldots, \beta_m$ be a basis of $\textrm{span}(w_1, \ldots, w_n)$.
Take any consistent list $(s_1, \ldots, s_n) \in w_1 \times \ldots \times w_n$. Then there exist unique
\[ x_1, \ldots, x_m \textrm{ respectively in } \beta_1, \ldots, \beta_m \]
such that
\[ p[\beta_1, \ldots, \beta_m; w_i](x_1, \ldots, x_m) = s_i \]
for each $i$ from $1$ to $n$.
\hfill $\triangle$
\end{lemma}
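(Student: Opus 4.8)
The plan is to read off the needed vectors $x_1, \ldots, x_m$ from the one structural fact that does all the work here: since $(s_1, \ldots, s_n)$ is consistent, $q$ restricts to a vector space isomorphism $q|_S : S \to q[S]$, where $S := \textrm{span}(s_1, \ldots, s_n)$ and, by linearity of $q$, $q[S] = \textrm{span}(q(s_1), \ldots, q(s_n)) = \textrm{span}(w_1, \ldots, w_n)$. In particular $\dim S = m$ and $\beta_1, \ldots, \beta_m$ is a basis of $q[S]$. So I would simply set $x_j := (q|_S)^{-1}(\beta_j)$ for $j$ from $1$ to $m$; then each $x_j$ lies in $S$ and in the coset $\beta_j$ (i.e. $q(x_j) = \beta_j$, so $x_j \in \beta_j$), and $x_1, \ldots, x_m$ is a basis of $S$.

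For existence I would then just verify the required identities. Fix $i$ and write $p_i := p[\beta_1, \ldots, \beta_m; w_i]$ with coefficients $\lambda_1, \ldots, \lambda_m \in \mathbb{F}$, so that $w_i = \beta_1^{\lambda_1} \ldots \beta_m^{\lambda_m}$ (this FCLCF is well defined since $w_i \in \textrm{span}(w_1,\ldots,w_n) = \textrm{span}(\beta_1,\ldots,\beta_m)$ and $\beta_1,\ldots,\beta_m$ is a basis). On one hand $p_i(x_1, \ldots, x_m) = x_1^{\lambda_1} \ldots x_m^{\lambda_m} \in \textrm{span}(x_1, \ldots, x_m) = S$; on the other hand, applying the linear map $q$ and using $q(x_j) = \beta_j$ gives $q(p_i(x_1, \ldots, x_m)) = \beta_1^{\lambda_1} \ldots \beta_m^{\lambda_m} = w_i = q(s_i)$. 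Since $s_i$ also lies in $S$ and $q|_S$ is injective, this forces $p_i(x_1, \ldots, x_m) = s_i$, which is exactly the claim.

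For uniqueness, suppose $x_1', \ldots, x_m'$ with $x_j' \in \beta_j$ also satisfy $p[\beta_1, \ldots, \beta_m; w_i](x_1', \ldots, x_m') = s_i$ for every $i$. Each $s_i$ is then an FCLCF of $x_1', \ldots, x_m'$, so $S \subseteq \textrm{span}(x_1', \ldots, x_m')$; the latter has dimension at most $m = \dim S$, so the inclusion is an equality and in particular every $x_j'$ lies in $S$. Since $q(x_j') = \beta_j = q(x_j)$ with $x_j, x_j' \in S$ and $q|_S$ injective, we conclude $x_j' = x_j$. There is no deep obstacle in any of this; the only point that calls for a little care is precisely this last step, where one must not presuppose that the candidate vectors $x_j'$ lie in $S$ but deduce it from the defining equations together with the dimension count. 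Everything else is a routine application of the ``$(s_1,\ldots,s_n)$ consistent $\iff$ $q|_S$ is an isomorphism'' dictionary and of the compatibility of FCLCFs with the linear map $q$.
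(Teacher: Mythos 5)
Your proof is correct and follows essentially the same route as the paper: existence via $x_j := (q|_{\mathrm{span}(s_1,\ldots,s_n)})^{-1}(\beta_j)$, and uniqueness via the dimension count showing $\mathrm{span}(s_1,\ldots,s_n) = \mathrm{span}(x_1',\ldots,x_m')$, which forces each $x_j'$ into the domain where $q$ is injective. The only difference is that you spell out the verification $p[\beta_1,\ldots,\beta_m;w_i](x_1,\ldots,x_m)=s_i$ (via $q$-injectivity on the span) where the paper dismisses it as ``clear''.
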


\noindent Note that here, a basis of $\textrm{span}(w_1, \ldots, w_n)$ can be selected \emph{before} the consistent list $(s_1, \ldots, s_n)$ is given.

\begin{proof}
Existence: Take any consistent list $(s_1, \ldots, s_n) \in w_1 \times \ldots \times w_n$. For each $j$ from $1$ to $m$,
\[ \beta_j \in \textrm{span}(w_1, \ldots, w_n) = \textrm{span} \left( q(s_1), \ldots, q(s_n) \right) \]
and so we can define $x_j$ by
\[ x_j = \left( q|_{\textrm{span}(s_1, \ldots, s_n)} \right)^{-1}(\beta_j). \]
Clearly, since $q|_{\textrm{span}(s_1, \ldots, s_n)}$ is an isomorphism,
\[ p[\beta_1, \ldots, \beta_m; w_i](x_1, \ldots, x_m) = s_i \]
for each $i$ from $1$ to $n$.

$\textrm{}$ \\ Uniqueness: Take any $x_1, \ldots, x_m \textrm{ respectively in } \beta_1, \ldots, \beta_m$ such that
\[ p[\beta_1, \ldots, \beta_m; w_i](x_1, \ldots, x_m) = s_i \]
for each $i$ from $1$ to $n$. (We will determine explicitly the values of $x_1, \ldots, x_m$, thus proving uniqueness.)

$\textrm{}$ \\ Since $q(x_1),\ldots,q(x_m)$ are linearly independent, $x_1,\ldots,x_m$ are themselves linearly independent, and so $\textrm{span}(x_1,\ldots,x_m)$ has dimension $m$. Since $q|_{\textrm{span}(s_1, \ldots, s_n)}$ is an isomorphism, $\textrm{span}(s_1, \ldots, s_n)$ also has dimension $m$.

$\textrm{}$ \\ Now $\textrm{span}(s_1, \ldots, s_n) \subset \textrm{span}(x_1,\ldots,x_m)$, since $s_i \in \textrm{span}(x_1,\ldots,x_m)$ for each $i$. But since $\textrm{span}(s_1, \ldots, s_n)$ and $\textrm{span}(x_1,\ldots,x_m)$ both have the same dimension, it also follows that $\textrm{span}(x_1,\ldots,x_n) \subset \textrm{span}(s_1,\ldots,s_m)$.

$\textrm{}$ \\ Thus, for each $j$ from $1$ to $m$,
\[ x_j \in \textrm{span}(s_1, \ldots, s_n) \: \textrm{and} \: x_j \in \beta_j \in \textrm{span} \left( q(s_1), \ldots, q(s_n) \right). \]
Thus, since $q$ is injective on $\textrm{span}(s_1, \ldots, s_n)$, we must have that
\[ x_j = \left( q|_{\textrm{span}(s_1, \ldots, s_n)} \right)^{-1}(\beta_j). \]
\end{proof}

\begin{basicthm}
Let $w_1, \ldots, w_n$ be elements of $W$. Let $b_1, \ldots, b_m$ be linearly independent elements of $W$ whose span contains $\textrm{span}(w_1, \ldots, w_n)$.
Then for all consistent lists $(s_1, \ldots, s_n) \in w_1 \times \ldots \times w_n$, there exist (not necessarily unique)
\[ x_1, \ldots, x_m \textrm{ respectively in } b_1, \ldots, b_m \]
such that
\[ p[b_1, \ldots, b_m; w_i](x_1, \ldots, x_m) = s_i \]
for each $i$ from $1$ to $n$.
\hfill $\triangle$
\end{basicthm}

\begin{proof}
Let $\beta_1, \ldots, \beta_k$ be a basis of $\textrm{span}(w_1, \ldots, w_n)$. Let $\beta_{k+1},\ldots,\beta_m$ be elements of $X$ such that $\beta_1,\ldots,\beta_k,\beta_{k+1},\ldots,\beta_m$ form a basis of $\textrm{span}(b_1, \ldots, b_m)$.

\textrm{} \\ Let $y_1, \ldots, y_k \textrm{ respectively in } \beta_1,\ldots,\beta_k$ be such that
\[ p[\beta_1, \ldots, \beta_k; w_i](y_1, \ldots, y_k) = s_i \]
for each $i$ from $1$ to $n$. Let $y_{k+1},\ldots,y_m$ be any elements respectively of $\beta_{k+1},\ldots,\beta_m$.

$\textrm{}$ \\ The list $(y_1,\ldots,y_k,y_{k+1},\ldots,y_m)$ is consistent. So let $x_1,\ldots,x_m \textrm{ respectively in } b_1,\ldots,b_m$ be such that
\[ p[b_1, \ldots, b_m; q(y_i)](x_1, \ldots, x_m) = y_i \]
for each $i$ from $1$ to $m$ --- and so in particular for each $i$ from $1$ to $k$.

$\textrm{}$ \\ We can express every member of $(s_1, \ldots, s_n)$ as a linear combination of (i.e. as the result of an FCLCF applied to) $(y_1,\ldots,y_k)$, and can express each member of $(y_1,\ldots,y_k)$ as a linear combination of $(x_1,\ldots,x_m)$. It follows that we can express every member of $(s_1, \ldots, s_n)$ as a linear combination of $(x_1,\ldots,x_m)$; say
\[ p_i(x_1,\ldots,x_m) = s_i \]
for each $i$ from $1$ to $n$. Obviously, each $p_i$ then satisfies the property that
\[ p_i(b_1,\ldots,b_m) = w_i, \]
i.e.
\[ p_i = p[b_1, \ldots, b_m; w_i]. \]
So we are done.
\end{proof}

\noindent Our ``alternative definition'' of a consistent list of units (earlier on) was that there exists some choice of fundamental dimensions such that every unit in the list is in the span of some choice of fundamental units. Now, we are saying that even if we choose from the outset to work with some particular list of fundamental dimensions, a list of units is consistent if and only if it can be expressed as the span of some list of fundamental units.

\begin{prop}[alternative definitions of dimensionally invariant relations]
Let $w_1, \ldots, w_n$ be elements of $W$, and $f$ a relation on $(w_1, \ldots, w_n)$.
Then the following are equivalent:
\begin{enumerate}
\item $f$ is dimensionally invariant
\item there exists a linearly independent list $b_1, \ldots, b_m$ of elements of $W$ whose span contains $\textrm{span}(w_1, \ldots, w_n)$ such that

\textrm{ \indent } \textrm{ \indent } \textrm{ \indent } for any $a_1, \ldots, a_n \in V$ and $(x_1, \ldots, x_m),(y_1, \ldots, y_m) \in b_1 \times \ldots \times b_m$,
\begin{align*}
   & f \left( a_1 p[b_1, \ldots, b_m; w_1](x_1, \ldots, x_m), \ldots, a_n p[b_1, \ldots, b_m; w_n](x_1, \ldots, x_m) \right) \\ 
 = & f \left( a_1 p[b_1, \ldots, b_m; w_1](y_1, \ldots, y_m), \ldots, a_n p[b_1, \ldots, b_m; w_n](y_1, \ldots, y_m) \right)
\end{align*}
\item for \emph{every} linearly independent list $b_1, \ldots, b_m$ of elements of $W$ whose span contains $\textrm{span}(w_1, \ldots, w_n)$,

\textrm{ \indent } \textrm{ \indent } \textrm{ \indent } for any $a_1, \ldots, a_n \in V$ and $(x_1, \ldots, x_m)$, $(y_1, \ldots, y_m) \in b_1 \times \ldots \times b_m$,
\begin{align*}
   & f \left( a_1 p[b_1, \ldots, b_m; w_1](x_1, \ldots, x_m), \ldots, a_n p[b_1, \ldots, b_m; w_n](x_1, \ldots, x_m) \right) \\ 
 = & f \left( a_1 p[b_1, \ldots, b_m; w_1](y_1, \ldots, y_m), \ldots, a_n p[b_1, \ldots, b_m; w_n](y_1, \ldots, y_m) \right).
\end{align*}
\end{enumerate}
\hfill $\triangle$
\end{prop}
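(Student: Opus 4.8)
The plan is to establish the cycle of implications $(3) \Rightarrow (2) \Rightarrow (1) \Rightarrow (3)$. The step $(3) \Rightarrow (2)$ is immediate once one notes that the family of lists quantified over in $(2)$ and $(3)$ is nonempty: any basis of $\textrm{span}(w_1, \ldots, w_n)$ is a linearly independent list of elements of $W$ (elements of $W$, since $\textrm{span}(w_1, \ldots, w_n) \subseteq W$) whose span contains — indeed equals — $\textrm{span}(w_1, \ldots, w_n)$. So a statement holding for \emph{every} such list holds for \emph{some} such list.

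For $(2) \Rightarrow (1)$, fix a list $b_1, \ldots, b_m$ witnessing $(2)$ and take arbitrary $a_1, \ldots, a_n \in V$ together with consistent lists $(s_1, \ldots, s_n), (t_1, \ldots, t_n) \in w_1 \times \ldots \times w_n$; the goal is $f(a_1 s_1, \ldots, a_n s_n) = f(a_1 t_1, \ldots, a_n t_n)$. Applying the Basic Theorem to the consistent list $(s_1, \ldots, s_n)$ yields $x_1, \ldots, x_m$ respectively in $b_1, \ldots, b_m$ with $p[b_1, \ldots, b_m; w_i](x_1, \ldots, x_m) = s_i$ for each $i$; applying it to $(t_1, \ldots, t_n)$ yields $y_1, \ldots, y_m$ respectively in $b_1, \ldots, b_m$ with $p[b_1, \ldots, b_m; w_i](y_1, \ldots, y_m) = t_i$. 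Substituting these two tuples into the displayed equality of $(2)$ gives exactly the required identity.

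For $(1) \Rightarrow (3)$, fix any linearly independent $b_1, \ldots, b_m$ in $W$ with $\textrm{span}(w_1, \ldots, w_n) \subseteq \textrm{span}(b_1, \ldots, b_m)$, and take arbitrary $a_1, \ldots, a_n \in V$ and $(x_1, \ldots, x_m), (y_1, \ldots, y_m) \in b_1 \times \ldots \times b_m$. Put $s_i := p[b_1, \ldots, b_m; w_i](x_1, \ldots, x_m)$ and $t_i := p[b_1, \ldots, b_m; w_i](y_1, \ldots, y_m)$. Then $s_i \in w_i$ and $t_i \in w_i$, since $p[b_1, \ldots, b_m; w_i](b_1, \ldots, b_m) = w_i$ and each $x_j, y_j$ lies in $b_j$. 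The crucial observation is that $(s_1, \ldots, s_n)$ and $(t_1, \ldots, t_n)$ are consistent lists: because $q(x_1), \ldots, q(x_m)$ equal $b_1, \ldots, b_m$ and are therefore linearly independent, the ``if'' direction of the Proposition on alternative definitions of ``consistent'' (with $n = m$ and $y_j = x_j$) shows $(x_1, \ldots, x_m)$ is consistent, i.e. $q$ is injective on $\textrm{span}(x_1, \ldots, x_m)$; and since each $s_i$ is a product of powers of $x_1, \ldots, x_m$ we have $\textrm{span}(s_1, \ldots, s_n) \subseteq \textrm{span}(x_1, \ldots, x_m)$, so $q$ is injective there as well. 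The same argument handles the $t$'s. Feeding $a_1, \ldots, a_n$ and these two consistent lists into the definition of dimensional invariance of $f$ gives $f(a_1 s_1, \ldots, a_n s_n) = f(a_1 t_1, \ldots, a_n t_n)$, which is precisely the displayed equality of $(3)$.

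The only non-routine ingredient is the Basic Theorem, invoked in $(2) \Rightarrow (1)$: this is exactly the fact that a consistent list of quantities, once a spanning list $b_1, \ldots, b_m$ of dimensions has been fixed in advance, can be realised by plugging suitable representative values of $b_1, \ldots, b_m$ into the FCLCFs $p[b_1, \ldots, b_m; w_i]$. Everything else — checking that $s_i \in w_i$, that composition of FCLCFs behaves as expected, and that a sub-span of a consistent span is consistent — is bookkeeping. I therefore expect no real obstacle beyond keeping the $p[b_1, \ldots, b_m; \cdot]$ notation straight; the genuine content has already been packaged into the Basic Theorem.
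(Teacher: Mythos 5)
Your proof is correct and follows exactly the route the paper intends: the paper's proof is simply ``Easy, using the above theorem,'' and the Basic Theorem is precisely the ingredient you deploy for $(2)\Rightarrow(1)$, with the remaining implications handled by the same routine checks (non-emptiness of the family of lists, consistency of the lists $(s_1,\ldots,s_n)$ and $(t_1,\ldots,t_n)$ via the alternative definition of `consistent'). No gaps.
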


\begin{proof}
Easy, using the above theorem.
\end{proof}

\noindent The above proposition highlights what the concept of dimensional invariance really is: no matter how we rescale the fundamental dimensions, the result remains the same.

$\textrm{}$ \\ We now give the $\pi$-theorem. The same theorem can be given using other frameworks and definitions than the ones developed above (see, for example, [4]), but any proof that is not over-complicated will be essentially the same in its key points as the one given here.

\begin{thm}[$\pi$-Theorem]
Let $w_1, \ldots, w_n$ be elements of $W$, and $f$ a dimensionally invariant relation on $(w_1, \ldots, w_n)$. Let $m$ be the dimension of $\textrm{span}(w_1, \ldots, w_n)$. Then there exists an $(n-m)$-input dimensionless relation $g$ and $n$-input FCLCFs $\pi_1, \ldots, \pi_{n-m}$ such that
\[ \pi_i[w_1 \times \ldots \times w_n] \subset V \textrm{ for all } i \textrm{ from 1 to } n-m \]
and, for all $(x_1, \ldots, x_n) \in w_1 \times \ldots \times w_n$,
\[ f(x_1, \ldots, x_n) = g \left( \pi_1(x_1, \ldots, x_n), \ldots, \pi_{n-m}(x_1, \ldots, x_n) \right). \]
\hfill $\triangle$
\end{thm}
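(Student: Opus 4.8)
The plan is to build the $\pi$-groups directly from a basis chosen among the $w_i$ themselves, manufacture a concrete ``reference list'' realising all of $w_1,\ldots,w_n$, and define $g$ by evaluating $f$ on suitably rescaled versions of that list. First I would pick indices $i_1<\cdots<i_m$ with $w_{i_1},\ldots,w_{i_m}$ a basis of $U:=\textrm{span}(w_1,\ldots,w_n)$ (possible since a spanning family contains a basis); since permuting the inputs of $f$ preserves both dimensional invariance and the shape of the desired conclusion, I may present the argument as though $\{i_1,\ldots,i_m\}=\{1,\ldots,m\}$. For each $i$ from $m+1$ to $n$ set $q_i:=p[w_1,\ldots,w_m;w_i]\in L_m$, so $q_i(w_1,\ldots,w_m)=w_i$, and define $\pi_{i-m}\in L_n$ by $\pi_{i-m}(x_1,\ldots,x_n):=x_i\,q_i(x_1,\ldots,x_m)^{-1}$. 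Then $\pi_{i-m}(w_1,\ldots,w_n)=w_iw_i^{-1}=\mathbf{0}_W$, i.e.\ $\pi_{i-m}\in K[w_1,\ldots,w_n]$, so $\pi_{i-m}[w_1\times\cdots\times w_n]\subset V$; and there are exactly $n-m$ of them.

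Next I would fix a reference list. Choose any representatives $e_1\in w_1,\ldots,e_m\in w_m$; as $q(e_1)=w_1,\ldots,q(e_m)=w_m$ are independent, $q$ is injective on $\textrm{span}(e_1,\ldots,e_m)$, so $(e_1,\ldots,e_m)$ is consistent. Put $e_i:=q_i(e_1,\ldots,e_m)\in w_i$ for $i>m$; then $(e_1,\ldots,e_n)$ lies in that same span, hence is a consistent list in $w_1\times\cdots\times w_n$, and $\pi_j(e_1,\ldots,e_n)=e_{m+j}e_{m+j}^{-1}=1$ for every $j$. Now define the $(n-m)$-input dimensionless relation
\[ g(v_1,\ldots,v_{n-m}):=f(e_1,\ldots,e_m,\,v_1e_{m+1},\ldots,v_{n-m}e_n). \]

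Finally I would verify the identity for an arbitrary $(x_1,\ldots,x_n)\in w_1\times\cdots\times w_n$. Write $x_i=a_ie_i$ with $a_i:=x_ie_i^{-1}\in V$. By multiplicativity of FCLCFs (and $\pi_j(e_1,\ldots,e_n)=1$), $v_j:=\pi_j(x_1,\ldots,x_n)=\pi_j(a_1,\ldots,a_n)=a_{m+j}\,q_{m+j}(a_1,\ldots,a_m)^{-1}$. Rescale the reference basis: set $\tilde e_i:=q_i(a_1e_1,\ldots,a_me_m)=q_i(a_1,\ldots,a_m)\,e_i\in w_i$ for $i>m$; the list $(a_1e_1,\ldots,a_me_m,\tilde e_{m+1},\ldots,\tilde e_n)$ lies in $\textrm{span}(a_1e_1,\ldots,a_me_m)$, on which $q$ is injective, so it too is a consistent list in $w_1\times\cdots\times w_n$. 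Applying dimensional invariance with scalars $(1,\ldots,1,v_1,\ldots,v_{n-m})\in V^n$ to the two consistent lists $(e_1,\ldots,e_n)$ and $(a_1e_1,\ldots,a_me_m,\tilde e_{m+1},\ldots,\tilde e_n)$ yields
\[ g(v_1,\ldots,v_{n-m})=f(a_1e_1,\ldots,a_me_m,\,v_1\tilde e_{m+1},\ldots,v_{n-m}\tilde e_n). \]
Since $v_j\tilde e_{m+j}=a_{m+j}q_{m+j}(a_1,\ldots,a_m)^{-1}q_{m+j}(a_1,\ldots,a_m)e_{m+j}=a_{m+j}e_{m+j}=x_{m+j}$ and $a_ie_i=x_i$ for $i\le m$, the right-hand side is exactly $f(x_1,\ldots,x_n)$, as wanted.

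The step I expect to be the genuine obstacle is this last one. The naive attempt --- substitute the numbers $v_j$ into the $(m+j)$-th slot of the \emph{fixed} reference list $(e_1,\ldots,e_n)$ --- does not work, because $v_je_{m+j}$ need not lie in $\textrm{span}(e_1,\ldots,e_m)$, so there is no consistent list against which dimensional invariance can be played. The fix is to rescale the reference basis by the $a_i$ first, and the calculation that makes everything line up is precisely the formula $v_j=a_{m+j}\,q_{m+j}(a_1,\ldots,a_m)^{-1}$: this is exactly the scalar needed so that $v_j\tilde e_{m+j}=x_{m+j}$. Everything else --- extracting a basis from the $w_i$, checking that the various lists are consistent, the multiplicativity bookkeeping --- is routine given the propositions already established (indeed one could instead quote item (3) of the ``alternative definitions'' proposition to package the invariance step).
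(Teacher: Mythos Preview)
Your proof is correct and follows essentially the same route as the paper's: select a basis $w_{r_1},\ldots,w_{r_m}$ from among the $w_i$, form the $\pi$-groups $x_i\,p[w_{r_1},\ldots,w_{r_m};w_i](x_{r_1},\ldots,x_{r_m})^{-1}$, and use that the list $\bigl(p[w_{r_1},\ldots,w_{r_m};w_i](x_{r_1},\ldots,x_{r_m})\bigr)_{i=1}^n$ is consistent to invoke dimensional invariance. The only cosmetic difference is that the paper packages the invariance step via an abstract $n$-input dimensionless relation $f'$ (well-defined on \emph{any} consistent list) and then drops the $m$ arguments that are identically~$1$, whereas you fix a concrete reference list $(e_1,\ldots,e_n)$ from the outset and define $g$ directly---your ``rescaled reference'' $(a_1e_1,\ldots,a_me_m,\tilde e_{m+1},\ldots,\tilde e_n)$ is exactly the paper's $x$-dependent consistent list.
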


\begin{proof}
Take any $(x_1, \ldots, x_n) \in w_1 \times \ldots \times w_n$.

$\textrm{}$ \\ Let $ \{ r_1, \ldots, r_m \} \subset \{ 1, \ldots, n \}$ be such that $w_{r_1}, \ldots, w_{r_m}$ is a basis of $\textrm{span}(w_1, \ldots, w_n)$. Then by the ``alternative definition of consistent'', the list
\[ \left( p[w_{r_1}, \ldots, w_{r_m};w_1](x_{r_1}, \ldots, x_{r_m}), \ldots, p[w_{r_1}, \ldots, w_{r_m};w_n](x_{r_1}, \ldots, x_{r_m}) \right) \]
is consistent. Note that it is also in $w_1 \times \ldots \times w_n$. For the sake of tidiness, let
\[ p_i = p[w_{r_1}, \ldots, w_{r_m};w_i] \]
for each $i$ from $1$ to $n$.
So
\[ \left( p_1 (x_{r_1}, \ldots, x_{r_m}), \ldots, p_n (x_{r_1}, \ldots, x_{r_m}) \right) \]
is consistent.
Let $f'$ be the $n$-input dimensionless relation given by
\[ f'(a_1, \ldots, a_n) = f(a_1 s_1, \ldots, a_n s_n) \textrm{ for all } a_1, \ldots a_n \in V \]
where $(s_1, \ldots, s_n)$ can be any consistent list in $w_1 \times \ldots \times w_n$; by definition of $f$ being dimensionally invariant, $f'$ is independent of the choice of $(s_1, \ldots, s_n)$.
Then,
\begin{align*}
& f(x_1, \ldots, x_n) \\
= & f \left( \left( x_1 p_1(x_{r_1}, \ldots, x_{r_m})^{-1} \right) p_1(x_{r_1}, \ldots, x_{r_m}), \ldots, \left( x_n p_n(x_{r_1}, \ldots, x_{r_m})^{-1} \right) p_n(x_{r_1}, \ldots, x_{r_m}) \right) \\ 
= & f' \left( x_1 p_1(x_{r_1}, \ldots, x_{r_m})^{-1}, \ldots, x_n p_n(x_{r_1}, \ldots, x_{r_m})^{-1} \right).
\end{align*}
We now have an expression that is nearly in the desired form; only, it has $n$ FCLCFs as arguments of an $n$-input dimensionless relation $f'$. Note, however, that for each $i$ from $1$ to $m$, the function
\[ p_{r_i} = p[w_{r_1}, \ldots, w_{r_m};r_i] \]
is just the function that outputs its $i^{\textrm{th}}$ input. In particular,
\[ p_{r_i}(x_{r_1}, \ldots, x_{r_m}) = x_{r_i}. \]
Hence, in the above expression of $f$ in terms of $f'$, $m$ of the $n$ arguments of $f'$ are just the constant value 1.

$\textrm{}$ \\ So we can can label the non-trivial arguments of $f'$ as $\pi_1(x_1, \ldots, x_n), \ldots, \pi_{n-m}(x_1, \ldots, x_n)$, and define a ``condensed'' $n-m$-input dimensionless relation $g$, such that
\[ f(x_1, \ldots, x_n) = g \left( \pi_1(x_1, \ldots, x_n), \ldots, \pi_{n-m}(x_1, \ldots, x_n) \right). \]
\end{proof}

\begin{note}
Observe that the FCLCFs $\pi_1, \ldots, \pi_{n-m}$ in the above proof are linearly independent, and so form a basis of $K[w_1, \ldots, w_n]$. Some people require in the statement of the $\pi$-theorem that $\pi_1, \ldots, \pi_{n-m}$ are ``independent'' in some appropriate sense (see, for example, [5]).

$\textrm{}$ \\ Also note that, whereas the theorem is typically formulated in such a way as to suggest that the FCLCFs are selected \emph{after} the dimensionally invariant relation is given, in actual fact the proof makes clear that the FCLCFs only depend on the dimension $w_1,\ldots,w_n$ and can be chosen \emph{before} a dimensionally invariant relation is given.
\hfill $\triangle$
\end{note}

\noindent This completes Section 1. We will now go on to consider ``dimensionally invariant sets''. A dimensionally invariant subset of a Cartesian product of dimensions $w_1 \times \ldots \times w_n$ is a subset which remains the same after a switching the system of units. Another way of understanding a dimensionally invariant subset is that there exists a dimensionally invariant relation which holds true precisely on this subset.

\section*{2. Basic properties of dimensionally invariant sets}

\begin{defi}
Let $w_1,\ldots,w_n$ be elements of $W$. Let $S$ be a subset of $w_1 \times \ldots \times w_n$. We say that $S$ is a \emph{dimensionally invariant subset} of $w_1 \times \ldots \times w_n$ if for any pair of consistent lists $(s_1,\ldots,s_n)$ and $(t_1,\ldots,t_n)$ in $w_1 \times \ldots \times w_n$,
\[ \{ (v_1,\ldots,v_n) | (v_1s_1,\ldots,v_ns_n) \in S \} = \{ (v_1,\ldots,v_n) | (v_1t_1,\ldots,v_nt_n) \in S \}. \]
\hfill $\triangle$
\end{defi}

\begin{note}
Let $S$ be a subset of $w_1 \times \ldots \times w_n$. Then for \emph{any} list $(s_1,\ldots,s_n) \in w_1 \times \ldots \times w_n$,
\[ \{ (v_1,\ldots,v_n) | (v_1s_1,\ldots,v_ns_n) \in S \} = \{ (x_1 s_1^{-1}, \ldots, x_n s_n^{-1}) | (x_1,\ldots,x_n) \in S \}. \]
\hfill $\triangle$
\end{note}

\noindent Intuitively, a subset of a Cartesian product of dimensions is dimensionally invariant if the set of dimensionless values that is obtained by ``ignoring the units'' is independent of the consistent list of units being used (or rather ignored).

\begin{defi}
Let $w_1,\ldots,w_n$ be elements of $W$. Let $S$ be a dimensionally invariant subset of $w_1 \times \ldots \times w_n$. Define
\[ \overline{S} = \{ (v_1,\ldots,v_n) | (v_1s_1,\ldots,v_ns_n) \in S \} \]
where $(s_1,\ldots,s_n)$ is any consistent list in $w_1 \times \ldots \times w_n$.
\hfill $\triangle$
\end{defi}

\begin{ex}
Let $S = w_1 \times \ldots \times w_n$. Clearly $S$ is dimensionally invariant and
\[ \overline{S} = V^n. \]
\hfill $\triangle$
\end{ex}

\begin{prop}[alternative definitions of `dimensionally invariant subset']
Let $w_1,\ldots,w_n$ be elements of $W$. Let $S$ be a subset of $w_1 \times \ldots \times w_n$. Then the following are equivalent:
\begin{enumerate}
\item
$S$ is dimensionally invariant
\item
the indicator function
\[ I_S: w_1 \times \ldots \times w_n \to \{ \textrm{TRUE}, \textrm{FALSE} \} \]
\[ I_S(\mathbf{x}) = \textrm{TRUE} \iff \mathbf{x} \in S \]
is a dimensionally invariant relation
\item
for any $(v_1,\ldots,v_n) \in V^n$ and any consistent lists $(s_1,\ldots,s_n)$ and $(t_1,\ldots,t_n)$ in $w_1 \times \ldots \times w_n$,
\[ (v_1s_1,\ldots,v_ns_n) \in S \: \Longleftrightarrow \: (v_1t_1,\ldots,v_nt_n) \in S \]
\item
for any $(v_1,\ldots,v_n) \in V^n$ and any consistent lists $(s_1,\ldots,s_n)$ and $(t_1,\ldots,t_n)$ in $w_1 \times \ldots \times w_n$,
\[ (v_1s_1,\ldots,v_ns_n) \in S \: \Longrightarrow \: (v_1t_1,\ldots,v_nt_n) \in S \]
\end{enumerate}
\hfill $\triangle$
\end{prop}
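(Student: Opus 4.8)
The plan is to close the cycle $(1)\Rightarrow(2)\Rightarrow(3)\Rightarrow(4)\Rightarrow(1)$, relying entirely on unwinding the definitions involved; I do not expect to need anything from Section~1 beyond the definitions of ``dimensionally invariant relation'' and ``dimensionally invariant subset'', so each step should be a one- or two-line verification.

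For $(1)\Rightarrow(2)$ I would fix $a_1,\ldots,a_n\in V$ and a pair of consistent lists $(s_1,\ldots,s_n)$, $(t_1,\ldots,t_n)$ in $w_1\times\ldots\times w_n$, and observe that the set equality furnished by $(1)$, applied to the particular tuple $(a_1,\ldots,a_n)$, says precisely that $(a_1s_1,\ldots,a_ns_n)\in S$ if and only if $(a_1t_1,\ldots,a_nt_n)\in S$; since $I_S$ takes the value $\textrm{TRUE}$ exactly on $S$, this is the equality $I_S(a_1s_1,\ldots,a_ns_n)=I_S(a_1t_1,\ldots,a_nt_n)$, which is what is required for $I_S$ to be a dimensionally invariant relation. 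For $(2)\Rightarrow(3)$, given $(v_1,\ldots,v_n)\in V^n$ and consistent lists $(s_1,\ldots,s_n)$, $(t_1,\ldots,t_n)$, I would apply dimensional invariance of $I_S$ with $a_i=v_i$ to get $I_S(v_1s_1,\ldots,v_ns_n)=I_S(v_1t_1,\ldots,v_nt_n)$, and then read this back as the biconditional in $(3)$. The implication $(3)\Rightarrow(4)$ is immediate, since a biconditional entails each of its one-way halves.

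The only step carrying even a little content is $(4)\Rightarrow(1)$. Here, given a pair of consistent lists $(s_1,\ldots,s_n)$, $(t_1,\ldots,t_n)$, I would show the two sets $\{(v_1,\ldots,v_n)\mid(v_1s_1,\ldots,v_ns_n)\in S\}$ and $\{(v_1,\ldots,v_n)\mid(v_1t_1,\ldots,v_nt_n)\in S\}$ coincide by double inclusion: one inclusion is $(4)$ applied to the ordered pair $\big((s_i),(t_i)\big)$, the other is $(4)$ applied to the ordered pair $\big((t_i),(s_i)\big)$ --- legitimate because in $(4)$ the quantifier ranges over all ordered pairs of consistent lists. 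The one point to be careful about is exactly this symmetry: one must check that condition $(4)$, as stated, really does allow the two lists to be interchanged, so that the ``$\Leftarrow$'' direction needed for set equality is obtained from the very same hypothesis. Once that is noted the cycle is complete and all four conditions are equivalent; I do not anticipate any genuine obstacle.
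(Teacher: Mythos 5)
Your proof is correct and is exactly the routine definition-unwinding that the paper itself has in mind: its entire proof reads ``Easy; mainly just manipulations of definitions.'' Your cycle $(1)\Rightarrow(2)\Rightarrow(3)\Rightarrow(4)\Rightarrow(1)$, including the observation that $(4)$ quantifies over \emph{ordered} pairs of consistent lists and hence yields both inclusions needed for $(1)$, fills in those manipulations correctly.
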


\noindent 2 and 4 are particularly important.

\begin{proof}
Easy; mainly just manipulations of definitions.
\end{proof}

\begin{note}
Dimensionally invariant sets and dimensionally invariant relations can be seen as different forms of exactly the same thing. Everything which we will say about one can immediately be turned into a statement about the other. We will work largely with dimensionally invariant sets, because it is simpler to deal with sets than to deal with indicator functions of sets.
\hfill $\triangle$
\end{note}

\begin{defi}
Let $w_1,\ldots,w_n$ be elements of $W$. Then we define $R[w_1,\ldots,w_n]$ as the set of dimensionally invariant subsets of $w_1 \times \ldots \times w_n$.

\textrm{} \\ We define $\overline{R}[w_1,\ldots,w_n] := \{ \overline{S} | S \in R[w_1,\ldots,w_n] \}$.
\hfill $\triangle$
\end{defi}

\begin{basicprop}
Let $w_1,\ldots,w_n$ be elements of $W$. Then the map
\begin{align*}
R[w_1,\ldots,w_n] & \to \overline{R}[w_1,\ldots,w_n] \\ 
                S & \mapsto \overline{S}
\end{align*}
is a bijection.
\hfill $\triangle$
\end{basicprop}

\begin{proof}
Easy.
\end{proof}

\begin{defi}
Let $w_1,\ldots,w_n$ be elements of $W$. Let $(s_1,\ldots,s_n)$ be a consistent list in $w_1 \times \ldots \times w_n$. Define the binary relation
\[ \sim_{(s_1,\ldots,s_n)} \]
on $w_1 \times \ldots \times w_n$ by the following: for all $(v_1,\ldots,v_n) \in V^n$ and $(x_1,\ldots,x_n) \in w_1 \times \ldots \times w_n$, \\

we have that
\[ (v_1s_1,\ldots,v_ns_n) \sim_{(s_1,\ldots,s_n)} (x_1,\ldots,x_n) \]

if and only if there exists a consistent list $(t_1,\ldots,t_n) \in w_1 \times \ldots \times w_n$ such that
\[ (x_1,\ldots,x_n) = (v_1t_1,\ldots,v_nt_n). \]
\hfill $\triangle$
\end{defi}

\noindent In other words, given any $(v_1s_1,\ldots,v_ns_n)$, the set of values $\mathbf{x}$ such that
\[ (v_1s_1,\ldots,v_ns_n) \sim_{(s_1,\ldots,s_n)} \mathbf{x} \]
is precisely the set of values that is obtained by replacing $(s_1,\ldots,s_n)$ with all other consistent lists in $w_1 \times \ldots \times w_n$.

\begin{defi}
Let $w_1,\ldots,w_n$ be elements of $W$. Define the binary relation $\sim$ on $w_1 \times \ldots \times w_n$ by the following: for all $(x_1,\ldots,x_n)$ and $(y_1,\ldots,y_n)$ in $w_1 \times \ldots \times w_n$, \\

we have that
\[ (x_1,\ldots,x_n) \sim (y_1,\ldots,y_n) \]

if and only if there exist $(v_1,\ldots,v_n) \in V^n$ and consistent lists $(s_1,\ldots,s_n)$ and $(t_1,\ldots,t_n)$ in $w_1 \times \ldots \times w_n$ such that
\begin{align*}
(v_1s_1,\ldots,v_ns_n) &= (x_1,\ldots,x_n) \\ 
(v_1t_1,\ldots,v_nt_n) &= (y_1,\ldots,y_n).
\end{align*}
In other words, for all $(x_1,\ldots,x_n)$ and $(y_1,\ldots,y_n)$ in $w_1 \times \ldots \times w_n$, we have that
\[ (x_1,\ldots,x_n) \sim (y_1,\ldots,y_n) \]
if and only if there exists a consistent list $(s_1,\ldots,s_n)$ in $w_1 \times \ldots \times w_n$ such that
\[ (x_1,\ldots,x_n) \sim_{(s_1,\ldots,s_n)} (y_1,\ldots,y_n). \]
\hfill $\triangle$
\end{defi}

\begin{note}
Take any $(v_1,\ldots,v_n) \in V^n$ and any consistent list $(s_1,\ldots,s_n) \in w_1 \times \ldots \times w_n$. If
\begin{align*}
(v_1s_1,\ldots,v_ns_n) & \sim_{(s_1,\ldots,s_n)} (x_1,\ldots,x_n) \textrm{ and} \\ 
(v_1s_1,\ldots,v_ns_n) & \sim_{(s_1,\ldots,s_n)} (y_1,\ldots,y_n)
\end{align*}
then clearly
\[ (x_1,\ldots,x_n) \sim (y_1,\ldots,y_n). \]
\hfill $\triangle$
\end{note}

\begin{lemma}
Let $w_1,\ldots,w_n$ be elements of $W$. For any consistent lists $(s_1,\ldots,s_n)$ and $(t_1,\ldots,t_n)$ in $w_1 \times \ldots \times w_n$
\[ \sim_{(s_1,\ldots,s_n)} = \sim_{(t_1,\ldots,t_n)}. \]
\hfill $\triangle$
\end{lemma}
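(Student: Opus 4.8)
The plan is to prove the inclusion $\sim_{(s_1,\ldots,s_n)}\ \subseteq\ \sim_{(t_1,\ldots,t_n)}$ of binary relations; the reverse inclusion then follows at once by interchanging the roles of the two lists. Note first that, by the Basic Proposition (uniqueness of the $a\in V$ with $x=as$), every element of $w_1\times\ldots\times w_n$ equals $(v_1s_1,\ldots,v_ns_n)$ for a unique $(v_1,\ldots,v_n)\in V^n$, and likewise with $t$ in place of $s$; so each of the two relations really is defined on all of $w_1\times\ldots\times w_n$, and the argument below does cover every pair. Suppose then that $(x_1,\ldots,x_n)\sim_{(s_1,\ldots,s_n)}(y_1,\ldots,y_n)$; writing $x_i=v_is_i$ with $v_i\in V$, the definition of $\sim_{(s_1,\ldots,s_n)}$ supplies a consistent list $(u_1,\ldots,u_n)\in w_1\times\ldots\times w_n$ with $y_i=v_iu_i$ for every $i$.

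Writing also $x_i=v_i't_i$ with $v_i'\in V$ (legitimate because $q(x_it_i^{-1})=w_iw_i^{-1}=\mathbf{0}_W$), what has to be exhibited is a consistent list $(u_1',\ldots,u_n')\in w_1\times\ldots\times w_n$ with $y_i=v_i'u_i'$ for every $i$. Solving that equation inside the abelian group forces $u_i'=y_it_ix_i^{-1}=u_it_is_i^{-1}$ (the $v_i$ cancel), and $q(u_i')=w_iw_iw_i^{-1}=w_i$, so $u_i'\in w_i$ is automatic. Thus the whole statement reduces to one assertion, which I expect to be the only genuine obstacle: \emph{the list $(u_1t_1s_1^{-1},\ldots,u_nt_ns_n^{-1})$ is consistent.} This is not a formality --- rescaling the entries of a consistent list by arbitrary dimensionless factors need not preserve consistency (think of the case where two of the $w_i$ coincide) --- so the proof must use that the factors $t_is_i^{-1}$ are constrained by $(s_i)$ and $(t_i)$ being consistent lists for the \emph{same} tuple of dimensions.

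To make this precise I would fix a basis $b_1,\ldots,b_m$ of $\textrm{span}(w_1,\ldots,w_n)$ and apply the Lemma above, which realises any consistent list in $w_1\times\ldots\times w_n$ from a choice of values in a prescribed basis of $\textrm{span}(w_1,\ldots,w_n)$, to each of $(s_i)$, $(t_i)$, $(u_i)$ in turn, obtaining tuples $(\sigma_1,\ldots,\sigma_m)$, $(\tau_1,\ldots,\tau_m)$, $(\mu_1,\ldots,\mu_m)$ in $b_1\times\ldots\times b_m$ with $p_i(\sigma_1,\ldots,\sigma_m)=s_i$, $p_i(\tau_1,\ldots,\tau_m)=t_i$ and $p_i(\mu_1,\ldots,\mu_m)=u_i$, where $p_i:=p[b_1,\ldots,b_m;w_i]$. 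Put $\mu_j':=\mu_j\tau_j\sigma_j^{-1}$. Then $q(\mu_j')=b_jb_jb_j^{-1}=b_j$, so $q(\mu_1'),\ldots,q(\mu_m')$ are linearly independent and hence $(\mu_1',\ldots,\mu_m')$ is consistent (by the Proposition on the alternative definition of consistency). By the Note recording that $p(x_1y_1,\ldots,x_ky_k)=p(x_1,\ldots,x_k)\,p(y_1,\ldots,y_k)$ and $p(x_1^{\lambda},\ldots,x_k^{\lambda})=p(x_1,\ldots,x_k)^{\lambda}$ for an FCLCF $p$, we get $p_i(\mu_1',\ldots,\mu_m')=p_i(\mu_1,\ldots,\mu_m)\,p_i(\tau_1,\ldots,\tau_m)\,p_i(\sigma_1,\ldots,\sigma_m)^{-1}=u_it_is_i^{-1}$ for each $i$; so every $u_it_is_i^{-1}$ lies in $\textrm{span}(\mu_1',\ldots,\mu_m')$, and the Proposition on the alternative definition of consistency now yields that $(u_1t_1s_1^{-1},\ldots,u_nt_ns_n^{-1})$ is consistent.

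It then only remains to check that this list does the job: $v_i'u_i'=(x_it_i^{-1})(u_it_is_i^{-1})=x_iu_is_i^{-1}=(v_is_i)u_is_i^{-1}=v_iu_i=y_i$ for every $i$, so $(x_1,\ldots,x_n)\sim_{(t_1,\ldots,t_n)}(y_1,\ldots,y_n)$, which completes the inclusion and therefore the lemma. Apart from the consistency assertion of the second paragraph, everything is routine arithmetic in the abelian group together with unwinding of the definitions.
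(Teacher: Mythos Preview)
Your proof is correct and follows essentially the same route as the paper: reduce to one inclusion by symmetry, identify the candidate replacement list $u_i'=u_it_is_i^{-1}$, and verify its consistency by realising each of the three given consistent lists via the Lemma in terms of a basis of $\textrm{span}(w_1,\ldots,w_n)$, then using multiplicativity of FCLCFs and the alternative characterisation of consistency. Your version is organised a little more symmetrically (you apply the Lemma uniformly to $(s_i)$, $(t_i)$, $(u_i)$ with one fixed basis $b_1,\ldots,b_m$ of $W$, whereas the paper starts from a basis of $\textrm{span}(t_1,\ldots,t_n)$ in $X$), but the underlying mechanism is identical.
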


\begin{proof}
The statement can be proved by showing that
\begin{enumerate}
\item
for any $(x_1,\ldots,x_n)$ and $(y_1,\ldots,y_n)$ in $w_1 \times \ldots \times w_n$,
\[ (x_1,\ldots,x_n) \sim_{(t_1,\ldots,t_n)} (y_1,\ldots,y_n) \: \Rightarrow \: (x_1,\ldots,x_n) \sim_{(s_1,\ldots,s_n)} (y_1,\ldots,y_n). \]
\item
for any $(x_1,\ldots,x_n)$ and $(y_1,\ldots,y_n)$ in $w_1 \times \ldots \times w_n$,
\[ (x_1,\ldots,x_n) \sim_{(s_1,\ldots,s_n)} (y_1,\ldots,y_n) \: \Rightarrow \: (x_1,\ldots,x_n) \sim_{(t_1,\ldots,t_n)} (y_1,\ldots,y_n). \]
\end{enumerate}
1. Any $(x_1,\ldots,x_n)$ and $(y_1,\ldots,y_n)$ that are related under $\sim_{(t_1,\ldots,t_n)}$ can be expressed as
\begin{align*}
(x_1,\ldots,x_n) &= (v_1t_1,\ldots,v_nt_n) \\
(y_1,\ldots,y_n) &= (v_1u_1,\ldots,v_nu_n)
\end{align*}

\noindent So take any $(v_1,\ldots,v_n) \in V^n$ and consistent list $(u_1,\ldots,u_n) \in w_1 \times \ldots \times w_n$; if we can show that
\[ (v_1t_1,\ldots,v_nt_n) \sim_{(s_1,\ldots,s_n)} (v_1u_1,\ldots,v_nu_n), \]
then we are done.

$\textrm{}$ \\ Let $c_1,\ldots,c_m$ be a basis of $\textrm{span}(t_1,\ldots,t_n)$.

$\textrm{}$ \\ $(u_1,\ldots,u_n)$ is consistent; and obviously, $q(c_1),\ldots,q(c_m)$ are linearly independent and their span is equal to $\textrm{span}(w_1,\ldots,w_n)$.
So let $(a_1,\ldots,a_m) \in V^n$ be such that
\[ p[q(c_1), \ldots, q(c_m); w_i](a_1c_1, \ldots, a_mc_m) = u_i \]
for each $i$ from $1$ to $n$.
So in particular,
\begin{align*}
u_i &= p[q(c_1), \ldots, q(c_m); w_i](a_1c_1, \ldots, a_mc_m) \\ 
    &= p[q(c_1), \ldots, q(c_m); w_i](a_1, \ldots, a_m) p[q(c_1), \ldots, q(c_m); w_i](c_1, \ldots, c_m) \\ 
    &= p[q(c_1), \ldots, q(c_m); w_i](a_1, \ldots, a_m) t_i
\end{align*}
for each $i$ from $1$ to $n$.
Now, for each $i$ from $1$ to $n$, let $z_i = v_i t_i s_i^{-1}$; so
\[ (v_1t_1,\ldots,v_nt_n) = (z_1s_1,\ldots,z_ns_n). \]
$(s_1,\ldots,s_n)$ is a consistent list in $w_1 \times \ldots \times w_n$; so let $d_1,\ldots,d_m$ be the elements respectively of $q(c_1),\ldots,q(c_m)$ such that
\[ p[q(c_1), \ldots, q(c_m); w_i](d_1, \ldots, d_m) = s_i \]
for each $i$ from $1$ to $n$.
Then, for each $i$ from $1$ to $n$,
\begin{align*}
v_iu_i &= v_i p[q(c_1),\ldots,q(c_m);w_i](a_1,\ldots,a_m)t_i \\ 
       &= z_i p[q(c_1),\ldots,q(c_m);w_i](a_1,\ldots,a_m) s_i \\ 
       &= z_i p[q(c_1),\ldots,q(c_m);w_i](a_1,\ldots,a_m) p[q(c_1), \ldots, q(c_m); w_i](d_1, \ldots, d_m) \\ 
       &= z_i p[q(c_1),\ldots,q(c_m);w_i](a_1d_1,\ldots,a_md_m)
\end{align*}
So if we let $l_i = p[q(c_1),\ldots,q(c_m);w_i](a_1d_1,\ldots,a_md_m)$ for each $i$ from $1$ to $n$, then:
just as
\[ (v_1t_1,\ldots,v_nt_n) = (z_1s_1,\ldots,z_ns_n), \]
so also
\[ (v_1u_1,\ldots,v_nu_n) = (z_1l_1,\ldots,z_nl_n). \]
$(l_1,\ldots,l_n)$ is consistent, because $l_1,\ldots,l_n \in \textrm{span}(a_1d_1,\ldots,a_md_m)$.

$\textrm{}$ \\ Hence $(v_1t_1,\ldots,v_nt_n) \sim_{(s_1,\ldots,s_n)} (v_1u_1,\ldots,v_nu_n)$.

$\textrm{}$ \\ 2. Just switch round the consistent lists $(s_1,\ldots,s_n)$ and $(t_1,\ldots,t_n)$ in 1.
\end{proof}

\noindent The above proof may appear complicated, but it is actually very simple in its essence: if one list of physical quantities can be ``rescaled'' to give another, when working in one particular list of units, then there is a well-defined concept of the scale-factors by which the rescaling was done, and these scale-factors still hold if we work in a different list of units.

$\textrm{}$ \\ This lemma immediately gives the following proposition.

\begin{prop}
Let $w_1,\ldots,w_n$ be elements of $W$. For any consistent list $(s_1,\ldots,s_n) \in w_1 \times \ldots \times w_n$,
\[ \sim_{(s_1,\ldots,s_n)} \ = \ \sim. \]
\hfill $\triangle$
\end{prop}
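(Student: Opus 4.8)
The plan is to derive this directly from the preceding Lemma and the two equivalent characterisations of $\sim$ given in its definition. Recall that $\sim$ was defined so that $(x_1,\ldots,x_n) \sim (y_1,\ldots,y_n)$ holds precisely when there \emph{exists} some consistent list $(s_1,\ldots,s_n)$ in $w_1 \times \ldots \times w_n$ with $(x_1,\ldots,x_n) \sim_{(s_1,\ldots,s_n)} (y_1,\ldots,y_n)$. So the task is simply to show that this existentially quantified condition is equivalent to the condition for one fixed, arbitrary consistent list.

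First I would fix an arbitrary consistent list $(s_1,\ldots,s_n) \in w_1 \times \ldots \times w_n$ and prove the inclusion $\sim_{(s_1,\ldots,s_n)}\, \subseteq\, \sim$: if $(x_1,\ldots,x_n) \sim_{(s_1,\ldots,s_n)} (y_1,\ldots,y_n)$, then $(s_1,\ldots,s_n)$ itself witnesses the defining existential condition for $\sim$, so $(x_1,\ldots,x_n) \sim (y_1,\ldots,y_n)$. Next I would prove the reverse inclusion $\sim\, \subseteq\, \sim_{(s_1,\ldots,s_n)}$: if $(x_1,\ldots,x_n) \sim (y_1,\ldots,y_n)$, then there is \emph{some} consistent list $(t_1,\ldots,t_n) \in w_1 \times \ldots \times w_n$ with $(x_1,\ldots,x_n) \sim_{(t_1,\ldots,t_n)} (y_1,\ldots,y_n)$; but the Lemma asserts $\sim_{(t_1,\ldots,t_n)}\, =\, \sim_{(s_1,\ldots,s_n)}$, and hence $(x_1,\ldots,x_n) \sim_{(s_1,\ldots,s_n)} (y_1,\ldots,y_n)$. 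Combining the two inclusions gives $\sim_{(s_1,\ldots,s_n)}\, =\, \sim$.

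There is essentially no obstacle here: all the genuine work — the delicate rescaling argument showing the scale factors are unit-independent — has already been carried out in the Lemma, and this Proposition is just the bookkeeping step that repackages it using the definition of $\sim$. The only point worth a moment's care is making sure the two displayed formulations of $\sim$ in its definition are being used consistently (the ``there exist $(v_1,\ldots,v_n)$ and consistent $(s_1,\ldots,s_n),(t_1,\ldots,t_n)$'' form and the ``there exists a consistent $(s_1,\ldots,s_n)$ with $\sim_{(s_1,\ldots,s_n)}$'' form), but these were already observed to be equivalent when $\sim$ was defined, so nothing new needs to be checked.
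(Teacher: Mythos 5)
Your argument is correct and is essentially identical to the paper's own proof: both directions are handled the same way, with the forward inclusion witnessed by $(s_1,\ldots,s_n)$ itself and the reverse inclusion obtained by taking the witnessing consistent list $(t_1,\ldots,t_n)$ and invoking the preceding Lemma to conclude $\sim_{(t_1,\ldots,t_n)}\,=\,\sim_{(s_1,\ldots,s_n)}$. Nothing further is needed.
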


\begin{proof}
Obviously, for any $(x_1,\ldots,x_n)$ and $(y_1,\ldots,y_n)$ in $w_1 \times \ldots \times w_n$,
\[ (x_1,\ldots,x_n) \sim_{(s_1,\ldots,s_n)} (y_1,\ldots,y_n) \: \Rightarrow \: (x_1,\ldots,x_n) \sim (y_1,\ldots,y_n). \]
Now take any $(x_1,\ldots,x_n)$ and $(y_1,\ldots,y_n)$ in $w_1 \times \ldots \times w_n$ such that
\[ (x_1,\ldots,x_n) \sim (y_1,\ldots,y_n). \]
Then there exists a consistent list $(t_1,\ldots,t_n) \in w_1 \times \ldots \times w_n$ such that
\[ (x_1,\ldots,x_n) \sim_{(t_1,\ldots,t_n)} (y_1,\ldots,y_n). \]
Hence
\[ (x_1,\ldots,x_n) \sim_{(s_1,\ldots,s_n)} (y_1,\ldots,y_n). \]
\end{proof}

\begin{prop}
Let $w_1,\ldots,w_n$ be elements of $W$. Then $\sim$ is an equivalence relation.
\hfill $\triangle$
\end{prop}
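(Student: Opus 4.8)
The plan is to check reflexivity, symmetry and transitivity directly, using the two results immediately preceding the statement: that $\sim_{(s_1,\ldots,s_n)}$ does not depend on the chosen consistent list, and that $\sim_{(s_1,\ldots,s_n)}\,=\,\sim$ for any consistent list. Throughout I will use the Basic Proposition: for $x_i, s_i \in w_i$ there is a unique $v_i \in V$ with $x_i = v_i s_i$.

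\emph{Reflexivity.} First note that a consistent list in $w_1 \times \ldots \times w_n$ exists: lift a basis of $\textrm{span}(w_1,\ldots,w_n) \subset W$ to elements $y_1,\ldots,y_m$ of $X$, set $s_i = p[q(y_1),\ldots,q(y_m);w_i](y_1,\ldots,y_m)$, and apply the alternative definition of ``consistent''. Fix such a list $(s_1,\ldots,s_n)$. Given $(x_1,\ldots,x_n) \in w_1 \times \ldots \times w_n$, put $v_i = x_i s_i^{-1} \in V$, so $x_i = v_i s_i$; taking \emph{both} of the consistent lists in the definition of $\sim$ to be $(s_1,\ldots,s_n)$ witnesses $(x_1,\ldots,x_n) \sim (x_1,\ldots,x_n)$. \emph{Symmetry.} The defining condition for $(x_1,\ldots,x_n) \sim (y_1,\ldots,y_n)$ is symmetric under simultaneously interchanging $(x_1,\ldots,x_n) \leftrightarrow (y_1,\ldots,y_n)$ and the two consistent lists $(s_1,\ldots,s_n) \leftrightarrow (t_1,\ldots,t_n)$, with $(v_1,\ldots,v_n)$ unchanged; hence $(y_1,\ldots,y_n) \sim (x_1,\ldots,x_n)$.

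\emph{Transitivity.} Suppose $(x_1,\ldots,x_n) \sim (y_1,\ldots,y_n)$ and $(y_1,\ldots,y_n) \sim (z_1,\ldots,z_n)$. By the reformulation of $\sim$ via the relations $\sim_{(\cdot)}$, the first gives a consistent list $(s_1,\ldots,s_n)$ with $(x_1,\ldots,x_n) \sim_{(s_1,\ldots,s_n)} (y_1,\ldots,y_n)$; unwinding the definition of $\sim_{(s_1,\ldots,s_n)}$, there are $(v_1,\ldots,v_n) \in V^n$ (namely $v_i = x_i s_i^{-1}$) and a consistent list $(t_1,\ldots,t_n)$ with $x_i = v_i s_i$ and $y_i = v_i t_i$ for all $i$. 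Now invoke the Proposition that $\sim_{(t_1,\ldots,t_n)}\,=\,\sim$: since $(y_1,\ldots,y_n) \sim (z_1,\ldots,z_n)$, also $(y_1,\ldots,y_n) \sim_{(t_1,\ldots,t_n)} (z_1,\ldots,z_n)$. Because $y_i = v_i t_i$ and the scalar attached to the pair $(y_i,t_i)$ is the unique one of the Basic Proposition, unwinding this last relation produces a consistent list $(u_1,\ldots,u_n)$ with $z_i = v_i u_i$ for all $i$. Then $x_i = v_i s_i$ and $z_i = v_i u_i$ with $(s_1,\ldots,s_n)$ and $(u_1,\ldots,u_n)$ both consistent, which is precisely $(x_1,\ldots,x_n) \sim (z_1,\ldots,z_n)$.

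The only step with any content is transitivity, and the crux there is re-expressing the second equivalence $(y_1,\ldots,y_n) \sim (z_1,\ldots,z_n)$ relative to the consistent list $(t_1,\ldots,t_n)$ thrown up by the first equivalence, so that one and the same tuple $(v_1,\ldots,v_n)$ of scale factors serves for both. This is exactly what the preceding Lemma and Proposition (independence of $\sim_{(s_1,\ldots,s_n)}$ from the list, and its coincidence with $\sim$) are there to provide; without them the two equivalences would carry a priori incompatible scale factors and could not be composed. Reflexivity and symmetry, by contrast, are pure unwinding of the definition.
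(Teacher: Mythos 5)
Your proof is correct and rests on exactly the same key fact as the paper's, namely that $\sim_{(s_1,\ldots,s_n)}$ coincides with $\sim$ for every consistent list, which is what allows the two hypothesised equivalences in the transitivity step to be given a common tuple of scale factors. The only cosmetic difference is that the paper re-expresses both relations with respect to one fixed list and cites the preceding Note, whereas you chain through the intermediate list $(t_1,\ldots,t_n)$ and invoke uniqueness of the scale factors; your treatment of reflexivity (explicitly exhibiting a consistent list) is more detailed than the paper's ``clearly''.
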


\begin{proof}
$\sim$ is clearly reflexive and symmetric.

$\textrm{}$ \\ Suppose that for some $(x_1,\ldots,x_n)$, $(y_1,\ldots,y_n)$ and $(z_1,\ldots,z_n)$ in $w_1 \times \ldots \times w_n$,
\[ (x_1,\ldots,x_n) \sim (y_1,\ldots,y_n) \; \textrm{and} \; (y_1,\ldots,y_n) \sim (z_1,\ldots,z_n). \]
Then in particular, for some/any $(s_1,\ldots,s_n) \in w_1 \times \ldots \times w_n$,
\begin{align*}
(y_1,\ldots,y_n) & \sim_{(s_1,\ldots,s_n)} (x_1,\ldots,x_n) \; \textrm{and} \\ 
(y_1,\ldots,y_n) & \sim_{(s_1,\ldots,s_n)} (z_1,\ldots,z_n)
\end{align*}
and so
\[ (x_1,\ldots,x_n) \sim (z_1,\ldots,z_n). \]
\end{proof}

\begin{defi}
Let $w_1,\ldots,w_n$ be elements of $W$. Denote:
\begin{itemize}
\item
$Q[w_1,\ldots,w_n] := (w_1 \times \ldots \times w_n)/ \sim$
\item
for any $\mathbf{x} \in w_1 \times \ldots \times w_n$,
\[ Cl(\mathbf{x}):= \{ \mathbf{y} \in w_1 \times \ldots \times w_n | \mathbf{y} \sim \mathbf{x} \} \in Q[w_1,\ldots,w_n]. \]
\end{itemize}
\hfill $\triangle$
\end{defi}

\begin{prop}[dimensional invariance topology]
Let $w_1,\ldots,w_n$ be elements of $W$.
Then the set $R[w_1,\ldots,w_n]$ of dimensionally invariant subsets of $w_1 \times \ldots \times w_n$ is precisely the set of all unions of elements of $Q[w_1,\ldots,w_n]$.
In other words, for any subset $S$ of $w_1 \times \ldots \times w_n$, $S$ is dimensionally invariant if and only if there exists $\mathcal{U} \subset Q[w_1,\ldots,w_n]$ which forms a partition of $S$,
\[ S = \bigcup_{U \in \mathcal{U}} U \]
\end{prop}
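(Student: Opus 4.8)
The plan is to read off both inclusions from characterisation~4 in the earlier proposition on alternative definitions of `dimensionally invariant subset', i.e. that $S$ is dimensionally invariant iff $(v_1s_1,\ldots,v_ns_n)\in S$ implies $(v_1t_1,\ldots,v_nt_n)\in S$ for all $(v_1,\ldots,v_n)\in V^n$ and all consistent lists $(s_1,\ldots,s_n),(t_1,\ldots,t_n)$ in $w_1\times\ldots\times w_n$. The only thing to observe is that this condition is precisely the statement ``$S$ is a union of $\sim$-classes'', once one unwinds the definition of $\sim$: namely $\mathbf{x}\sim\mathbf{y}$ holds precisely when $\mathbf{x}=(v_1s_1,\ldots,v_ns_n)$ and $\mathbf{y}=(v_1t_1,\ldots,v_nt_n)$ for some common $(v_1,\ldots,v_n)\in V^n$ and consistent lists $(s_i),(t_i)$.

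For the forward direction I would assume $S$ dimensionally invariant and show $Cl(\mathbf{x})\subset S$ for every $\mathbf{x}\in S$. Given $\mathbf{x}\in S$ and $\mathbf{y}\sim\mathbf{x}$, I would use the definition of $\sim$ to write $\mathbf{x}=(v_1s_1,\ldots,v_ns_n)$ and $\mathbf{y}=(v_1t_1,\ldots,v_nt_n)$ with $(s_i),(t_i)$ consistent, and then apply characterisation~4 with exactly these data to conclude $\mathbf{y}\in S$. Consequently $S=\bigcup_{\mathbf{x}\in S}Cl(\mathbf{x})$; since $\sim$ is an equivalence relation the classes $Cl(\mathbf{x})$ are nonempty and pairwise disjoint, so the family $\mathcal{U}:=\{Cl(\mathbf{x}):\mathbf{x}\in S\}\subset Q[w_1,\ldots,w_n]$ is the desired partition of $S$.

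For the converse I would assume $S=\bigcup_{U\in\mathcal{U}}U$ with $\mathcal{U}\subset Q[w_1,\ldots,w_n]$ and first note that this makes $S$ closed under $\sim$: if $\mathbf{x}\in S$ then $\mathbf{x}$ lies in some $U\in\mathcal{U}$, necessarily $U=Cl(\mathbf{x})$, so any $\mathbf{y}\sim\mathbf{x}$ lies in $U\subset S$. Then to verify characterisation~4, I would take $(v_1,\ldots,v_n)\in V^n$ and consistent lists $(s_i),(t_i)$ with $(v_1s_1,\ldots,v_ns_n)\in S$; by the definition of $\sim$ we have $(v_1s_1,\ldots,v_ns_n)\sim(v_1t_1,\ldots,v_nt_n)$, and closure under $\sim$ then gives $(v_1t_1,\ldots,v_nt_n)\in S$, which is characterisation~4 and hence dimensional invariance.

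I do not anticipate a genuine obstacle: the real content has already been packaged into the proposition that $\sim$ is an equivalence relation and into the earlier list of alternative definitions of dimensional invariance, so the argument is essentially the remark that ``dimensionally invariant'' and ``closed under $\sim$'' say the same thing. The only point demanding a moment's care is confirming that the family $\mathcal{U}$ produced in the forward direction is honestly a \emph{partition} of $S$, which is immediate from pairwise-disjointness of the equivalence classes.
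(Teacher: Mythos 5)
Your proof is correct and follows exactly the route the paper intends: the paper's own proof merely asserts that both directions are ``easy'' and that it suffices to show $\mathbf{x} \in S \Rightarrow Cl(\mathbf{x}) \subset S$, which is precisely the reduction you carry out, with the details supplied via characterisation~4 of the alternative definitions of `dimensionally invariant subset' and the fact that $\sim$ is an equivalence relation. No gaps.
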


\begin{proof}
It is easy to show that the union of any set of elements of $Q[w_1,\ldots,w_n]$ is dimensionally invariant.

$\textrm{}$ \\ To show that any dimensionally invariant subset of $S$ of $w_1 \times \ldots \times w_n$ is equal to a union of elements of $Q[w_1,\ldots,w_n]$, it is sufficient to show that
\[ \mathbf{x} \in S \: \Rightarrow \: Cl(\mathbf{x}) \subset S. \]
This too is easy.
\end{proof}

\begin{note}
$R[w_1,\ldots,w_n]$ is a topology on $w_1 \times \ldots \times w_n$, with a (unique) pairwise disjoint base $Q[w_1,\ldots,w_n]$. We shall call the elements of $Q[w_1,\ldots,w_n]$ \emph{basic dimensionally invariant subsets} of $w_1 \times \ldots \times w_n$.
\hfill $\triangle$
\end{note}

\begin{defi}
Let $w_1,\ldots,w_n$ be elements of $W$. Then define
\[ \overline{Q}[w_1,\ldots,w_n] := \{ \overline{B} | B \in Q[w_1,\ldots,w_n] \}. \]
\hfill $\triangle$
\end{defi}

\begin{prop}
Let $w_1,\ldots,w_n$ be elements of $W$. Then:
$\overline{R}[w_1\ldots,w_n]$ is a topology on $V^n$. For any consistent list $(s_1,\ldots,s_n) \in w_1 \times \ldots \times w_n$, let $\alpha_{(s_1,\ldots,s_n)}$ be the map
\begin{align*}
\alpha_{(s_1,\ldots,s_n)} : w_1 \times \ldots \times w_n & \to V^n \\ 
                                  (v_1s_1,\ldots,v_ns_n) & \mapsto (v_1,\ldots,v_n).
\end{align*}
Then
\[ \alpha_{(s_1,\ldots,s_n)}[S] = \overline{S} \]
for all $S \in R[w_1,\ldots,w_n]$, and
\[ \alpha_{(s_1,\ldots,s_n)} : (w_1 \times \ldots \times w_n, R[w_1,\ldots,w_n]) \to (V^n, \overline{R}[w_1\ldots,w_n]) \]
is a homeomorphism.
\hfill $\triangle$
\end{prop}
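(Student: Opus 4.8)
The plan is to reduce the whole statement to one observation: for any consistent list $(s_1,\ldots,s_n)$, the map $\alpha := \alpha_{(s_1,\ldots,s_n)}$ is a \emph{bijection} $w_1 \times \ldots \times w_n \to V^n$ that carries the family $R[w_1,\ldots,w_n]$ exactly onto the family $\overline{R}[w_1,\ldots,w_n]$. Once this is in hand, both remaining assertions --- that $\overline{R}[w_1,\ldots,w_n]$ is a topology, and that $\alpha$ is a homeomorphism --- drop out of elementary facts about bijections.

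First I would fix a consistent list $(s_1,\ldots,s_n) \in w_1 \times \ldots \times w_n$ and check that $\alpha$ is a well-defined bijection. By the Basic Proposition, each $x_i \in w_i$ has the form $x_i = v_i s_i$ for a \emph{unique} $v_i \in V$, namely $v_i = x_i s_i^{-1}$; so $\alpha$ is well-defined and coincides with the map $(x_1,\ldots,x_n) \mapsto (x_1 s_1^{-1},\ldots,x_n s_n^{-1})$, whose two-sided inverse is plainly $(v_1,\ldots,v_n) \mapsto (v_1 s_1,\ldots,v_n s_n)$.

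Next I would show $\alpha[S] = \overline{S}$ for every $S \in R[w_1,\ldots,w_n]$. For an arbitrary $S \subseteq w_1 \times \ldots \times w_n$ we have $\alpha[S] = \{(x_1 s_1^{-1},\ldots,x_n s_n^{-1}) : (x_1,\ldots,x_n) \in S\}$, which by the Note following the definition of ``dimensionally invariant subset'' is the same set as $\{(v_1,\ldots,v_n) : (v_1 s_1,\ldots,v_n s_n) \in S\}$; and when $S$ is dimensionally invariant this last set is, by definition, $\overline{S}$ (recall that $\overline{S}$ does not depend on the consistent list used to compute it). Ranging over all $S \in R[w_1,\ldots,w_n]$ then gives $\alpha[R[w_1,\ldots,w_n]] = \{\overline{S} : S \in R[w_1,\ldots,w_n]\} = \overline{R}[w_1,\ldots,w_n]$, and this identity holds for \emph{every} consistent list even though $\alpha$ itself varies with the list, because the right-hand side does not.

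Finally I would invoke the already-established fact that $R[w_1,\ldots,w_n]$ is a topology on $w_1 \times \ldots \times w_n$. A bijection sends $\emptyset$ to $\emptyset$ and the whole space to the whole space and commutes with arbitrary unions and finite intersections, so the image $\overline{R}[w_1,\ldots,w_n] = \alpha[R[w_1,\ldots,w_n]]$ is a topology on $V^n$; the same computation shows $\alpha$ is an open map, since $\alpha[S] = \overline{S} \in \overline{R}[w_1,\ldots,w_n]$ for each $S \in R[w_1,\ldots,w_n]$. For continuity, given $\overline{T} \in \overline{R}[w_1,\ldots,w_n]$ with $T \in R[w_1,\ldots,w_n]$, bijectivity gives $\alpha^{-1}[\overline{T}] = \alpha^{-1}[\alpha[T]] = T \in R[w_1,\ldots,w_n]$. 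Hence $\alpha$ is a homeomorphism. There is essentially no obstacle here beyond careful bookkeeping; the one point that deserves a moment's attention is keeping straight that $\overline{R}[w_1,\ldots,w_n]$ is an intrinsic object while the homeomorphism $\alpha$ witnessing it depends on a choice of consistent list --- and it is precisely the independence of $\overline{S}$ from that choice which reconciles the two.
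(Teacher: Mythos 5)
Your proposal is correct and follows essentially the same route as the paper's own (much terser) proof: observe that $\alpha_{(s_1,\ldots,s_n)}$ is a bijection carrying $R[w_1,\ldots,w_n]$ onto $\overline{R}[w_1,\ldots,w_n]$ via $S \mapsto \overline{S}$, and let the topology and homeomorphism claims follow formally. You simply fill in the bookkeeping that the paper dismisses with ``everything else follows.''
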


\begin{proof}
Take any consistent list $(s_1,\ldots,s_n) \in w_1 \times \ldots \times w_n$.
By definition, for any $S \in R[w_1,\ldots,w_n]$,
\[ \alpha_{(s_1,\ldots,s_n)}[S] = \overline{S}. \]
$\alpha_{(s_1,\ldots,s_n)}$ is clearly a bijection.
As established,
\[ \left. \left( \alpha_{(s_1,\ldots,s_n)}[.] \right) \right|_{R[w_1,\ldots,w_n]} \ : R[w_1,\ldots,w_n] \to \overline{R}[w_1\ldots,w_n] \]
is a bijection.
Everything else follows.
\end{proof}

\begin{cor}
Let $w_1,\ldots,w_n$ be elements of $W$. Then the elements of $\overline{Q}[w_1\ldots,w_n]$ form a partition of $V^n$, and any element of $\overline{R}[w_1\ldots,w_n]$ is equal to a union of elements of $\overline{Q}[w_1\ldots,w_n]$.
\hfill $\triangle$
\end{cor}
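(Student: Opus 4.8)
The plan is to transport the partition structure of $Q[w_1,\ldots,w_n]$ across the bijection supplied by the preceding proposition. First I would fix any consistent list $(s_1,\ldots,s_n) \in w_1 \times \ldots \times w_n$ (such a list exists: choose a basis $b_1,\ldots,b_m$ of $\textrm{span}(w_1,\ldots,w_n)$, pick $x_j \in b_j$ for each $j$, and set $s_i := p[b_1,\ldots,b_m;w_i](x_1,\ldots,x_m)$, which lies in $w_i$ and is consistent by the ``alternative definition of consistent''). Write $\alpha := \alpha_{(s_1,\ldots,s_n)}$. By the previous proposition, $\alpha$ is a bijection from $w_1 \times \ldots \times w_n$ onto $V^n$, and $\alpha[S] = \overline{S}$ for every $S \in R[w_1,\ldots,w_n]$.

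Next I would invoke the ``dimensional invariance topology'' proposition, which tells us that $Q[w_1,\ldots,w_n]$, being the set of equivalence classes of $\sim$, is a partition of $w_1 \times \ldots \times w_n$, and that every $S \in R[w_1,\ldots,w_n]$ is a union of elements of $Q[w_1,\ldots,w_n]$. Since each $B \in Q[w_1,\ldots,w_n]$ is itself dimensionally invariant (it lies in $R[w_1,\ldots,w_n]$), $\overline{B}$ is defined and equals $\alpha[B]$; hence $\overline{Q}[w_1,\ldots,w_n] = \{\,\alpha[B] : B \in Q[w_1,\ldots,w_n]\,\}$ is precisely the image under $\alpha$ of the partition $Q[w_1,\ldots,w_n]$. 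A bijection carries a partition of its domain to a partition of its codomain (disjoint sets have disjoint images, distinct sets have distinct images, and $\bigcup_{B \in Q[w_1,\ldots,w_n]} \alpha[B] = \alpha[w_1 \times \ldots \times w_n] = V^n$), so $\overline{Q}[w_1,\ldots,w_n]$ partitions $V^n$.

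For the second assertion I would take an arbitrary $\overline{S} \in \overline{R}[w_1,\ldots,w_n]$ with $S \in R[w_1,\ldots,w_n]$, write $S = \bigcup_{U \in \mathcal{U}} U$ for some $\mathcal{U} \subset Q[w_1,\ldots,w_n]$ by the topology proposition, and apply $\alpha$, using that it commutes with unions: $\overline{S} = \alpha[S] = \bigcup_{U \in \mathcal{U}} \alpha[U] = \bigcup_{U \in \mathcal{U}} \overline{U}$, a union of elements of $\overline{Q}[w_1,\ldots,w_n]$.

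No step here is a genuine obstacle; the whole argument is transport of structure along the bijection $\alpha$. The only point needing a moment's care is reconciling the two descriptions of $\overline{Q}[w_1,\ldots,w_n]$ --- its definition as $\{\overline{B} : B \in Q[w_1,\ldots,w_n]\}$ versus its use here as $\{\alpha[B] : B \in Q[w_1,\ldots,w_n]\}$ --- which is immediate once one notes $Q[w_1,\ldots,w_n] \subset R[w_1,\ldots,w_n]$ and applies $\alpha[B] = \overline{B}$. (One could instead argue directly from the definition of $\overline{S}$ together with the homeomorphism statement, without ever naming $\alpha$, but routing through $\alpha$ is the cleanest route.)
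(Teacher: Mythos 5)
Your proof is correct and follows exactly the route the paper intends: the corollary is stated without proof precisely because it is the transport of the partition structure of $Q[w_1,\ldots,w_n]$ (from the ``dimensional invariance topology'' proposition) along the bijection $\alpha_{(s_1,\ldots,s_n)}$ established in the immediately preceding proposition. Your extra care in exhibiting a consistent list and in reconciling $\overline{B}$ with $\alpha[B]$ is sound and fills in the only details the paper leaves implicit.
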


\noindent We now go on to explore the relationship between dimensionally invariant sets and nondimensionalising ``$\pi$ groups'' (FCLCFs, in our terminology).

\section*{3. Nondimensionalisation of the dimensional invariance topology}

\begin{defi}
Let $w_1,\ldots,w_n$ be elements of $W$. For any subset $S \subset w_1 \times \ldots \times w_n$, we define the indicator function
\[ I[w_1,\ldots,w_n]_S : w_1 \times \ldots \times w_n \to \{ \textrm{TRUE}, \textrm{FALSE} \} \]
by
\[ I[w_1,\ldots,w_n]_S(\mathbf{x}) = \textrm{TRUE} \iff \mathbf{x} \in S. \]
Let $k$ be a nonnegative integer. Then for any $A \subset V^k$, we define the indicator function
\[ I^{(k)}_A : V^k \to \{ \textrm{TRUE}, \textrm{FALSE} \} \]
by
\[ I^{(k)}_A(\mathbf{x}) = \textrm{TRUE} \iff \mathbf{x} \in S. \]
\hfill $\triangle$
\end{defi}

\begin{notat}
Let $A$ and $B$ be sets, and let $f_1,\ldots,f_n$ be functions from $A$ to $B$.
Let $S$ be a subset of $A$, and let $T$ be a subset of $B^n$ such that
\[ \{ (f_1(x),\ldots,f_n(x)) | x \in S \} \subset T. \]
Then we define the function
\[ (f_1,\ldots,f_n)_{(S,T)}: S \to T \]
by
\[ (f_1,\ldots,f_n)_{(S,T)}(x) = (f_1(x),\ldots,f_n(x)) \]
for all $x \in S$.
\hfill $\triangle$
\end{notat}

\noindent If we incorporate into the statement of the $\pi$-theorem some of the content of other notes and theorems which have appeared above, then we can be express the $\pi$-theorem as a statement about dimensionally invariant sets in the following way:

\begin{thm}[$\pi$-theorem reformulated]
Let $w_1,\ldots,w_n$ be elements of $W$, and let $r$ be such that the dimension of $\textrm{span}(w_1,\ldots,w_n)$ is $n-r$. Then there exists a basis $(\pi_1,\ldots,\pi_r)$ of $K[w_1,\ldots,w_n]$ such that: \\

for every dimensionally invariant subset $S \subset w_1 \times \ldots \times w_n$ there exists a set $A \subset V^r$ such that
\[ s = (\pi_1,\ldots,\pi_r)_{(w_1 \times \ldots \times w_n,V^r)}{}^{-1}[A]. \]
\hfill $\triangle$
\end{thm}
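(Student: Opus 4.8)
The plan is to deduce this from the $\pi$-theorem for relations by passing between a dimensionally invariant set and its indicator function. The crucial observation is that the $\pi$-theorem, as proved above, actually attaches to the data $w_1,\ldots,w_n$ a \emph{fixed} family of FCLCFs: choosing $\{r_1,\ldots,r_m\}\subset\{1,\ldots,n\}$ so that $w_{r_1},\ldots,w_{r_m}$ is a basis of $\textrm{span}(w_1,\ldots,w_n)$ and an ordering $i_1,\ldots,i_r$ of the remaining indices, the $\pi$ groups appearing in that proof are precisely
\[ \pi_j \;=\; P_{i_j}\,\bigl(p[w_{r_1},\ldots,w_{r_m};w_{i_j}](P_{r_1},\ldots,P_{r_m})\bigr)^{-1}, \]
manifestly independent of the relation being nondimensionalised. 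As noted immediately after the $\pi$-theorem, these $\pi_1,\ldots,\pi_r$ are linearly independent, and since $\dim K[w_1,\ldots,w_n]=n-m=r$ they form a basis of $K[w_1,\ldots,w_n]$. I would fix this basis once and for all; each $\pi_j$ lies in $K[w_1,\ldots,w_n]$, so $\Pi:=(\pi_1,\ldots,\pi_r)_{(w_1\times\ldots\times w_n,\,V^r)}$ is well defined.

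Now take any dimensionally invariant $S\subset w_1\times\ldots\times w_n$. By the proposition on alternative definitions of a dimensionally invariant subset, the indicator relation $I[w_1,\ldots,w_n]_S$ is dimensionally invariant, so the $\pi$-theorem --- applied with the $\pi_j$ fixed as above --- gives an $r$-input dimensionless relation $g$ with $I[w_1,\ldots,w_n]_S(\mathbf{x})=g(\pi_1(\mathbf{x}),\ldots,\pi_r(\mathbf{x}))$ for all $\mathbf{x}$. Put $A:=\{\mathbf{v}\in V^r : g(\mathbf{v})=\textrm{TRUE}\}$. Then for every $\mathbf{x}\in w_1\times\ldots\times w_n$ we have $\mathbf{x}\in S\iff g(\Pi(\mathbf{x}))=\textrm{TRUE}\iff\Pi(\mathbf{x})\in A\iff\mathbf{x}\in\Pi^{-1}[A]$, i.e.\ $S=\Pi^{-1}[A]$, which is what is required.

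The same basis $(\pi_1,\ldots,\pi_r)$ can alternatively be understood as the one for which $\Pi$ realises the equivalence relation $\sim$ exactly: using multiplicativity of FCLCFs together with the fact that every element of $K[w_1,\ldots,w_n]$ equals $1$ on each consistent list in $w_1\times\ldots\times w_n$, one checks $\Pi(\mathbf{x})=\Pi(\mathbf{y})\iff\mathbf{x}\sim\mathbf{y}$; then, since by the ``dimensional invariance topology'' proposition a dimensionally invariant set is exactly a union of $\sim$-classes, $A:=\Pi[S]$ does the job. Either way, the genuinely substantive content was already discharged inside the proof of the $\pi$-theorem --- the collapse of the $m$ ``fundamental'' arguments to $1$, leaving a consistent condensation $g$ --- so I do not expect a new hard step here. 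The one point I would take care over is making the remark ``the $\pi$ groups depend only on $w_1,\ldots,w_n$'' fully precise, so that the single basis $(\pi_1,\ldots,\pi_r)$ is seen to serve simultaneously for \emph{all} dimensionally invariant subsets $S$; this is the only place where one must look back into the proof of the $\pi$-theorem rather than merely cite its statement.
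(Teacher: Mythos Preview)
Your proposal is correct and follows precisely the route the paper has in mind: the paper does not give a standalone proof here (it presents the statement explicitly as a reformulation obtained by ``incorporating into the statement of the $\pi$-theorem some of the content of other notes and theorems which have appeared above''), but the argument you spell out --- pass to the indicator relation, apply the $\pi$-theorem with the fixed basis coming from its proof, and read off $A$ from the resulting dimensionless relation --- is exactly the reasoning the paper later writes out in full inside the proof of the ``Complete $\pi$-Theorem, Version 1''. Your care in making explicit that the $\pi_j$ depend only on $w_1,\ldots,w_n$ (so that a single basis serves for all $S$) is precisely the content of the Note following the $\pi$-theorem.
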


\noindent Observe that this statement, even though it actually contains \emph{more} information than the version given in section 1, is nonetheless more succinct than that version.

\begin{lemma}
Let $w_1,\ldots,w_n$ be elements of $W$, and let $r$ be such that the dimension of $\textrm{span}(w_1,\ldots,w_n)$ is $n-r$.
Let $(\pi_1,\ldots,\pi_r)$ and $(\psi_1,\ldots,\psi_r)$ be bases of $K[w_1,\ldots,w_n]$. Then there exist $r$-input FCLCFs $T_1,\ldots,T_r$ such that
\begin{itemize}
\item
for each $i$ from $1$ to $r$,
\[ \pi_i(.) = T_i( \psi_1(.), \ldots, \psi_r(.) ) \]
\item
the function
\[ (T_1,\ldots,T_r)_{(V^r,V^r)}: V^r \to V^r \]
is a bijection.\hfill $\triangle$
\end{itemize}
\end{lemma}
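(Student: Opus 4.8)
The plan is to treat this as a change-of-basis computation, converting the elementary linear-algebra fact that the transition matrix between two bases of $K[w_1,\ldots,w_n]$ is invertible into the required bijection on $V^r$. First I would record the transition matrix: since $(\pi_1,\ldots,\pi_r)$ and $(\psi_1,\ldots,\psi_r)$ are both bases of the vector space $K[w_1,\ldots,w_n] \subset L_n$, for each $i$ from $1$ to $r$ there are unique coefficients $a_{i1},\ldots,a_{ir} \in \mathbb{F}$ with $\pi_i = \psi_1^{a_{i1}}\ldots\psi_r^{a_{ir}}$ in $L_n$, and the matrix $A = (a_{ij})_{1 \le i,j \le r}$ is invertible over $\mathbb{F}$. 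I then define $T_i \in L_r$ to be the $r$-input FCLCF with coefficients $(a_{i1},\ldots,a_{ir})$, which is an FCLCF by construction.

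The first bullet is then immediate from the definitions of the vector-space operations on $L_n$: for any $(x_1,\ldots,x_n) \in X^n$,
\[ T_i\bigl(\psi_1(x_1,\ldots,x_n),\ldots,\psi_r(x_1,\ldots,x_n)\bigr) = \psi_1(x_1,\ldots,x_n)^{a_{i1}}\ldots\psi_r(x_1,\ldots,x_n)^{a_{ir}} = \bigl(\psi_1^{a_{i1}}\ldots\psi_r^{a_{ir}}\bigr)(x_1,\ldots,x_n) = \pi_i(x_1,\ldots,x_n). \]
Since $V$ is a subspace of $X$, each $T_i$ sends $V^r$ into $V$, so $(T_1,\ldots,T_r)_{(V^r,V^r)}$ is a legitimately defined map $V^r \to V^r$.

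For the second bullet I would exhibit a two-sided inverse. Put $B = (b_{ij})_{1 \le i,j \le r} := A^{-1}$ and let $U_j \in L_r$ be the FCLCF with coefficients $(b_{j1},\ldots,b_{jr})$; as before, $(U_1,\ldots,U_r)_{(V^r,V^r)}: V^r \to V^r$ is well-defined. Using the multiplicativity properties of FCLCFs, for any $(v_1,\ldots,v_r) \in V^r$ and any $j$,
\[ U_j\bigl(T_1(v_1,\ldots,v_r),\ldots,T_r(v_1,\ldots,v_r)\bigr) = \prod_{k=1}^r \Bigl(\prod_{l=1}^r v_l^{a_{kl}}\Bigr)^{b_{jk}} = \prod_{l=1}^r v_l^{\sum_{k=1}^r b_{jk}a_{kl}} = \prod_{l=1}^r v_l^{\delta_{jl}} = v_j, \]
since $\sum_k b_{jk}a_{kl} = (BA)_{jl} = \delta_{jl}$; and symmetrically $T_i\bigl(U_1(\mathbf{v}),\ldots,U_r(\mathbf{v})\bigr) = v_i$ because $AB = I$. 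Hence $(U_1,\ldots,U_r)_{(V^r,V^r)}$ and $(T_1,\ldots,T_r)_{(V^r,V^r)}$ compose to $\mathrm{id}_{V^r}$ in both orders, so the latter is a bijection.

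The argument is essentially routine and I do not expect a genuine obstacle; the only points demanding care are keeping the matrix indices straight in the product above and checking that the domain/codomain restrictions hidden in the $(\cdot)_{(S,T)}$ notation are valid — and both of those reduce to the facts that $V$ is a subspace of $X$ and that $BA = AB = I$.
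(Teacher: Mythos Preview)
Your proof is correct and follows essentially the same approach as the paper: both define the $T_i$ via the rows of the change-of-basis matrix and exhibit the inverse map using the rows of its inverse. You are slightly more explicit in checking that both compositions give the identity and that the codomain restriction to $V^r$ is legitimate, but otherwise the arguments coincide.
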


\begin{proof}
It is elementary linear algebra that each of $\pi_1,\ldots,\pi_r$ can be expressed as a linear combination of $(\psi_1,\ldots,\psi_r)$, and the $r \times r$ matrix of coefficients is invertible. Let $M$ be the matrix of coefficients; that is, we let the $i$-th row of $M$ be the list of coefficients in the expression of $\pi_i$ as a linear combination of $(\psi_1,\ldots,\psi_r)$.
Define $T_i$ as the FCLCF whose coefficients (in order) are the coefficients of $(\psi_1,\ldots,\psi_r)$ in the expression for $\pi_i$ --- i.e., define the coefficients of $T_i$ to be the values in the $i$-th row of $M$.
Clearly,
\[ \pi_i(.) = T_i( \psi_1(.), \ldots, \psi_r(.) ) \]
for each $i$ from $1$ to $r$.
Now take any $(v_1,\ldots,v_r) \in V^r$; then $(T_1,\ldots,T_r)_{(V^r,V^r)}(v_1,\ldots,v_r)$ is equal to
\[ ( \ v_1^{m_{1,1}} \ldots v_r^{m_{1,r}}, \ \ldots, \  v_1^{m_{r,1}} \ldots v_r^{m_{r,r}} \ ). \]
Diagrammatically, if we treat $(T_1,\ldots,T_r)_{(V^r,V^r)}(v_1,\ldots,v_r)$ as a column vector, then
\[ (T_1,\ldots,T_r)_{(V^r,V^r)}(v_1,\ldots,v_r) = M
\begin{pmatrix}
v_1 \\  \vdots \\  v_r
\end{pmatrix}
.\]
Let $N$ be the inverse of $M$. Then we can construct the inverse map $(T_1,\ldots,T_r)_{(V^r,V^r)}{}^{-1}$ defined on $V^r$ as
\begin{align*}
  & (T_1,\ldots,T_r)_{(V^r,V^r)}{}^{-1}(v_1,\ldots,v_r) \\ 
= & ( \ v_1^{n_{1,1}} \ldots v_r^{n_{1,r}}, \ \ldots, \  v_1^{n_{r,1}} \ldots v_r^{n_{r,r}} \ ).
\end{align*}
\end{proof}

\noindent We shall call the list $(T_1,\ldots,T_r)$ the \emph{transition from} $(\psi_1,\ldots,\psi_r)$ to $(\pi_1,\ldots,\pi_r)$.

\begin{defi}
Let $w_1,\ldots,w_n$ be elements of $W$, and let $r$ be such that the dimension of $\textrm{span}(w_1,\ldots,w_n)$ is $n-r$.
A basis of $K[w_1,\ldots,w_n]$ will be called ``special'' if (when expressed as an unordered set) it takes the form
\[ \left\{ (x_1,\ldots,x_n) \mapsto x_{l_i} p_i(x_{k_1},\ldots,x_{k_{n-r}}) | i \in \{1,\ldots,r\} \right\} \]
where
\[ \{1,\ldots,n\} = \{l_1,\ldots,l_r\} \cup \{k_1,\ldots,k_{n-r} \} \]
and $p_i$ is an $(n-r)$-input FCLCF for each $i$ from $1$ to $r$. 
\hfill $\triangle$
\end{defi}

\begin{note}
In the proof of the $\pi$-theorem, the basis $(\pi_1,\ldots,\pi_r)$ of $K[w_1,\ldots,w_n]$ which is constructed is a special basis.
\hfill $\triangle$
\end{note}

\begin{lemma}
Let $w_1,\ldots,w_n$ be elements of $W$, and let $(\psi_1,\ldots,\psi_r)$ be a special basis of $K[w_1,\ldots,w_n]$. Then the function
\[ (\psi_1,\ldots,\psi_r)_{(w_1 \times \ldots \times w_n,V^r)} : w_1 \times \ldots \times w_n \to V^r \]
is surjective.
\hfill $\triangle$
\end{lemma}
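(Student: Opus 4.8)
The plan is to fix a convenient consistent list once and for all, pull the whole question back to the dimensionless level via the multiplicativity of FCLCFs, and then exploit the combinatorial shape of a special basis to solve for the required dimensionless factors one coordinate at a time.

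First I would record the shape of the basis in the form of the definition: write the special basis as
$\psi_i(x_1,\ldots,x_n) = x_{l_i}\,p_i(x_{k_1},\ldots,x_{k_{n-r}})$ for $i$ from $1$ to $r$, where $\{1,\ldots,n\} = \{l_1,\ldots,l_r\}\cup\{k_1,\ldots,k_{n-r}\}$ is a disjoint decomposition and each $p_i$ is an $(n-r)$-input FCLCF. The point to emphasise is that the indices $l_1,\ldots,l_r$ are pairwise distinct and none of them occurs among $k_1,\ldots,k_{n-r}$. Next, choose any $(x_1,\ldots,x_n) \in w_1 \times \ldots \times w_n$ (the cosets $w_j$ are nonempty), pick $r_1,\ldots,r_m$ as in the proof of the $\pi$-theorem so that $w_{r_1},\ldots,w_{r_m}$ is a basis of $\textrm{span}(w_1,\ldots,w_n)$, and set $s_i := p[w_{r_1},\ldots,w_{r_m};w_i](x_{r_1},\ldots,x_{r_m}) \in w_i$; by the ``alternative definition of consistent'' the list $(s_1,\ldots,s_n)$ is consistent. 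Since each $\psi_i \in K[w_1,\ldots,w_n]$, the argument already used in the proof that FCLCF-composed relations are dimensionally invariant applies: $\psi_i(s_1,\ldots,s_n) \in V$ and $s_1^{0_{\mathbb{F}}}\cdots s_n^{0_{\mathbb{F}}} = 1 \in V$, so by injectivity of $q$ on $\textrm{span}(s_1,\ldots,s_n)$ we get $\psi_i(s_1,\ldots,s_n) = 1$ for every $i$.

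Then, given an arbitrary target $(v_1,\ldots,v_r) \in V^r$, I would define $a_j \in V$ by setting $a_{l_i} := v_i$ for each $i$ from $1$ to $r$ (well-defined precisely because the $l_i$ are distinct) and $a_j := 1$ for each $j \in \{k_1,\ldots,k_{n-r}\}$, and put $x_j := a_j s_j \in w_j$. Using the multiplicativity of FCLCFs (the relevant Note) together with $p_i(1,\ldots,1) = 1$ and $\psi_i(s_1,\ldots,s_n) = 1$,
\begin{align*}
\psi_i(x_1,\ldots,x_n) &= \psi_i(a_1,\ldots,a_n)\,\psi_i(s_1,\ldots,s_n) \\
&= a_{l_i}\,p_i(a_{k_1},\ldots,a_{k_{n-r}}) = v_i\,p_i(1,\ldots,1) = v_i
\end{align*}
for each $i$ from $1$ to $r$. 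Hence $(\psi_1,\ldots,\psi_r)_{(w_1 \times \ldots \times w_n,\,V^r)}(x_1,\ldots,x_n) = (v_1,\ldots,v_r)$, which establishes surjectivity.

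I do not expect a genuine obstacle here: every ingredient — existence of a consistent list, the vanishing of elements of $K[w_1,\ldots,w_n]$ on consistent lists, and the multiplicativity of FCLCFs — is already available from earlier in the paper. The only point requiring care is the index bookkeeping, namely that assigning $1$ to the $k$-inputs collapses every $p_i$ term to $1$ and leaves the $l_i$-th dimensionless factors free and mutually independent to be set equal to the prescribed $v_i$; this is exactly the structural feature that the word ``special'' encodes, and if one tried to run the argument with a non-special basis it would in general fail.
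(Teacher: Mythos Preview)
Your proof is correct and follows essentially the same route as the paper's: fix a reference point in $w_1\times\ldots\times w_n$, keep the $k$-coordinates unchanged, and adjust only the $l_i$-coordinates to hit the prescribed $v_i$. The only cosmetic difference is that the paper picks an arbitrary reference $(c_1,\ldots,c_n)$ and divides out the values $u_i=\psi_i(c_1,\ldots,c_n)$, whereas you go to the extra trouble of choosing a consistent reference list so that these normalising factors are already~$1$.
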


\begin{proof}
Express the basis $(\psi_1,\ldots,\psi_r)$ in the form
\[ \psi_i(x_1,\ldots,x_n) = x_{l_i} p_i(x_{k_1},\ldots,x_{k_{n-r}}) \ \forall (x_1,\ldots,x_n) \in w_1 \times \ldots \times w_n \]
for each $i$ from $1$ to $r$, where
\[ \{1,\ldots,n\} = \{l_1,\ldots,l_r\} \cup \{k_1,\ldots,k_{n-r} \} \]
and $p_i$ is an $(n-r)$-input FCLCF for each $i$ from $1$ to $r$.
Take any $(v_1,\ldots,v_n) \in V^r$. Fix a value $(c_1,\ldots,c_n) \in w_1 \times \ldots \times w_n$, and for each $i$ from $1$ to $r$, let
\[ u_i = \psi_i(c_1,\ldots,c_n). \]
Now, for each $i$ from $1$ to $r$, let
\[ x_{l_i} = v_i u_i^{-1} c_{l_i} \]
and for each $i$ from $1$ to $n-r$, let
\[ x_{k_i} = c_{k_i}. \]
Then for each $i$ from $1$ to $r$,
\begin{align*}
\psi_i(x_1,\ldots,x_n) &= v_i u_i^{-1} c_{l_i} p_i(c_{k_1},\ldots,x_{c_{n-r}}) \\
                       &= v_i u_i^{-1} \psi_i(c_1,\ldots,c_n) \\ 
                       &= v_i u_i^{-1} u_i \\ 
                       &= v_i
\end{align*}
So
\[ (\psi_1,\ldots,\psi_r)_{(w_1 \times \ldots \times w_n,V^r)}(x_1,\ldots,x_n) = (v_1,\ldots,v_n). \]
Thus $(\psi_1,\ldots,\psi_r)_{(w_1 \times \ldots \times w_n,V^r)}$ is surjective.
\end{proof}

\noindent From now on, for convenience and ease of reading, we shall abbreviate the subscript $_{(w_1 \times \ldots \times w_n,V^r)}$ to the subscript $_{\ast}$.

\begin{lemma}
Let $w_1,\ldots,w_n$ be elements of $W$. Then for any basis $(\pi_1,\ldots,\pi_r)$ of $K[w_1,\ldots,w_n]$, the function
\[ (\pi_1,\ldots,\pi_r)_{\ast} : w_1 \times \ldots \times w_n \to V^r \]
is surjective.
\hfill $\triangle$
\end{lemma}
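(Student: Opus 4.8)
The plan is to reduce the general case to the case of a special basis, which has already been dealt with, by passing through the transition map between two bases of $K[w_1,\ldots,w_n]$. So, given an arbitrary basis $(\pi_1,\ldots,\pi_r)$ of $K[w_1,\ldots,w_n]$, I would first invoke the existence of a \emph{special} basis $(\psi_1,\ldots,\psi_r)$ of $K[w_1,\ldots,w_n]$: such a basis is exactly the one produced in the proof of the $\pi$-theorem (cf.\ the Note following the definition of ``special'' basis), so one always exists. (When $r=0$ the statement is trivial, since $V^0$ is a single point; the empty list is then vacuously a special basis in any case, so no separate treatment is really needed.)

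Next I would apply the preceding lemma on transitions between bases of $K[w_1,\ldots,w_n]$ to the pair $(\pi_1,\ldots,\pi_r)$ and $(\psi_1,\ldots,\psi_r)$. This supplies $r$-input FCLCFs $T_1,\ldots,T_r$ with $\pi_i(.) = T_i(\psi_1(.),\ldots,\psi_r(.))$ for each $i$ from $1$ to $r$, and such that $(T_1,\ldots,T_r)_{(V^r,V^r)} : V^r \to V^r$ is a bijection. Unwinding the definition of the $(\,\cdot\,)_{\ast}$ notation, this identity says precisely that
\[ (\pi_1,\ldots,\pi_r)_{\ast} \;=\; (T_1,\ldots,T_r)_{(V^r,V^r)} \circ (\psi_1,\ldots,\psi_r)_{\ast} \]
as functions $w_1 \times \ldots \times w_n \to V^r$; the composite is well-defined because $(\psi_1,\ldots,\psi_r)_{\ast}$ takes values in $V^r$, which is the domain of $(T_1,\ldots,T_r)_{(V^r,V^r)}$.

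Finally I would conclude: by the lemma on special bases, $(\psi_1,\ldots,\psi_r)_{\ast}$ is surjective onto $V^r$; and $(T_1,\ldots,T_r)_{(V^r,V^r)}$ is a bijection of $V^r$ onto itself, hence surjective; so their composite $(\pi_1,\ldots,\pi_r)_{\ast}$ is surjective, as required. There is no genuine obstacle here — the only point needing care is securing a special basis to compare against, which is exactly what the construction in the $\pi$-theorem gives us, and the rest is just the elementary fact that a composition of surjections is a surjection.
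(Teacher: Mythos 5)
Your proposal is correct and follows essentially the same route as the paper: take a special basis $(\psi_1,\ldots,\psi_r)$, use the transition $(T_1,\ldots,T_r)$ to factor $(\pi_1,\ldots,\pi_r)_{\ast}$ as $(T_1,\ldots,T_r)_{(V^r,V^r)} \circ (\psi_1,\ldots,\psi_r)_{\ast}$, and conclude by surjectivity of each factor. The paper phrases this as a computation of images rather than explicitly as a composition of surjections, but the argument is identical.
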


\noindent In other words, every point in $V^r$ has a non-empty pre-image under $(\pi_1,\ldots,\pi_r)_{\ast}$.

\begin{proof}
Take any special basis $(\psi_1,\ldots,\psi_r)$ of $K[w_1,\ldots,w_n]$, and let $(T_1,\ldots,T_r)$ be the transition from $(\psi_1,\ldots,\psi_r)$ to $(\pi_1,\ldots,\pi_r)$. Then
\begin{align*}
  & (\pi_1,\ldots,\pi_r)_{\ast}[w_1 \times \ldots \times w_n] \\ 
= & ( \ T_1( \psi_1(.), \ldots, \psi_r(.) ), \ \ldots, \ T_r( \psi_1(.), \ldots, \psi_r(.) ) \ )_{\ast}[w_1 \times \ldots \times w_n] \\ 
= & (T_1,\ldots,T_r)_{(V^r,V^r)}[(\psi_1,\ldots,\psi_r)_{\ast}[w_1 \times \ldots \times w_n]] \\ 
= & (T_1,\ldots,T_r)_{(V^r,V^r)}[V^r] \\ 
= & V^r.
\end{align*}
\end{proof}

\begin{thm}[``Complete $\pi$-Theorem'', Version 1]
Let $w_1,\ldots,w_n$ be elements of $W$. For \emph{any} basis $(\pi_1,\ldots,\pi_r)$ of $K[w_1,\ldots,w_n]$,
\begin{enumerate}
\item
the dimensionally invariant subsets of $w_1 \times \ldots \times w_n$ are precisely the pre-images of the subsets of $V^r$ under $(\pi_1,\ldots,\pi_r)_{\ast}$
\item
the basic dimensionally invariant subsets of $w_1 \times \ldots \times w_n$ are precisely the pre-images of the singletons contained in $V^r$ under $(\pi_1,\ldots,\pi_r)_{\ast}$.
\end{enumerate}
\hfill $\triangle$
\end{thm}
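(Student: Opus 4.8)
\noindent The plan is to establish part~1 in full --- both inclusions --- and then to read off part~2 from it, using the surjectivity of $(\pi_1,\ldots,\pi_r)_{\ast}$ (the lemma immediately preceding) and the definition of the equivalence relation $\sim$.

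For the ``pre-image $\Rightarrow$ dimensionally invariant'' inclusion of part~1, take any $A \subset V^r$ and set $S = (\pi_1,\ldots,\pi_r)_{\ast}^{-1}[A]$. Since each $\pi_i \in K[w_1,\ldots,w_n]$ maps $w_1 \times \ldots \times w_n$ into $V$, the indicator function $I[w_1,\ldots,w_n]_S$ is the map $\mathbf{x} \mapsto I^{(r)}_A\left( \pi_1(\mathbf{x}), \ldots, \pi_r(\mathbf{x}) \right)$, which is precisely of the form treated in the Proposition on FCLCFs producing dimensionally invariant relations; hence $I[w_1,\ldots,w_n]_S$ is a dimensionally invariant relation, and by the proposition on the alternative definitions of dimensionally invariant subsets, $S$ is dimensionally invariant. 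For the reverse inclusion, invoke the reformulated $\pi$-theorem to obtain \emph{some} basis $(\psi_1,\ldots,\psi_r)$ of $K[w_1,\ldots,w_n]$ and a set $A' \subset V^r$ with $S = (\psi_1,\ldots,\psi_r)_{\ast}^{-1}[A']$. Now change to the given basis $(\pi_1,\ldots,\pi_r)$ via the lemma on transitions between bases of $K[w_1,\ldots,w_n]$: let $(T_1,\ldots,T_r)$ be the transition from $(\pi_1,\ldots,\pi_r)$ to $(\psi_1,\ldots,\psi_r)$, so that $\psi_i(.) = T_i\left( \pi_1(.), \ldots, \pi_r(.) \right)$ for each $i$. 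Evaluating coordinatewise at a point of $w_1 \times \ldots \times w_n$ --- and noting $\left( \pi_1(\mathbf{x}), \ldots, \pi_r(\mathbf{x}) \right) \in V^r$ --- gives $(\psi_1,\ldots,\psi_r)_{\ast} = (T_1,\ldots,T_r)_{(V^r,V^r)} \circ (\pi_1,\ldots,\pi_r)_{\ast}$ as maps on $w_1 \times \ldots \times w_n$, so $S = (\pi_1,\ldots,\pi_r)_{\ast}^{-1}[A]$ with $A = (T_1,\ldots,T_r)_{(V^r,V^r)}{}^{-1}[A']$. This proves part~1.

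The heart of part~2 is the claim that, for $\mathbf{x},\mathbf{y} \in w_1 \times \ldots \times w_n$, we have $(\pi_1,\ldots,\pi_r)_{\ast}(\mathbf{x}) = (\pi_1,\ldots,\pi_r)_{\ast}(\mathbf{y})$ if and only if $\mathbf{x} \sim \mathbf{y}$; equivalently, that the level sets of $(\pi_1,\ldots,\pi_r)_{\ast}$ are exactly the elements of $Q[w_1,\ldots,w_n]$. The ``$\sim$ forces equal images'' direction is a short computation: fixing a consistent list $(s_1,\ldots,s_n) \in w_1 \times \ldots \times w_n$ (such lists exist, as in the proof of the $\pi$-theorem), write $\mathbf{x} = (v_1 s_1, \ldots, v_n s_n)$ with $v_i = x_i s_i^{-1} \in V$; since $\sim \ = \ \sim_{(s_1,\ldots,s_n)}$ there is a consistent list $(t_1,\ldots,t_n)$ with $\mathbf{y} = (v_1 t_1, \ldots, v_n t_n)$; then by multiplicativity of FCLCFs together with the fact that a member of $K[w_1,\ldots,w_n]$ sends a consistent list to $1$ (the argument in the proof of the Proposition: this value and $1$ both lie in the span of the list and map under $q$ to $\mathbf{0}_W$, on which $q$ is injective), $\pi_i(\mathbf{x}) = \pi_i(v_1,\ldots,v_n) = \pi_i(\mathbf{y})$ for each $i$. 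For the converse, suppose $(\pi_1,\ldots,\pi_r)_{\ast}(\mathbf{x}) = (\pi_1,\ldots,\pi_r)_{\ast}(\mathbf{y})$ but $\mathbf{x} \not\sim \mathbf{y}$; then $Cl(\mathbf{y})$ lies in $Q[w_1,\ldots,w_n]$ and so is dimensionally invariant, whence by part~1 it equals $(\pi_1,\ldots,\pi_r)_{\ast}^{-1}[A]$ for some $A \subset V^r$; but $\mathbf{y} \in Cl(\mathbf{y})$ forces $(\pi_1,\ldots,\pi_r)_{\ast}(\mathbf{y}) \in A$, hence $(\pi_1,\ldots,\pi_r)_{\ast}(\mathbf{x}) \in A$, hence $\mathbf{x} \in Cl(\mathbf{y})$, contradicting $\mathbf{x} \not\sim \mathbf{y}$. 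Thus the level sets of $(\pi_1,\ldots,\pi_r)_{\ast}$ coincide with the elements of $Q[w_1,\ldots,w_n]$; and since $(\pi_1,\ldots,\pi_r)_{\ast}$ is surjective, every singleton in $V^r$ has a non-empty pre-image and the pre-images of singletons in $V^r$ are exactly those level sets --- that is, exactly the basic dimensionally invariant subsets, which is part~2.

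The step I expect to need the most care is the reverse inclusion of part~1: applying the transition lemma in the correct direction and verifying that composing with $(T_1,\ldots,T_r)_{(V^r,V^r)}$ genuinely realizes the identity $(\psi_1,\ldots,\psi_r)_{\ast} = (T_1,\ldots,T_r)_{(V^r,V^r)} \circ (\pi_1,\ldots,\pi_r)_{\ast}$ on $w_1 \times \ldots \times w_n$ and not merely on $V^r$. The observation that keeps part~2 short is the converse of the key equivalence, where dimensional invariance of $Cl(\mathbf{y})$ together with part~1 does all the work; the remainder is bookkeeping with the definitions of $\sim$, $Cl$ and the maps $(\pi_1,\ldots,\pi_r)_{\ast}$.
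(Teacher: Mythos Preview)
Your proof is correct and follows essentially the same approach as the paper: for part~1 you use the proposition on FCLCFs to show preimages are dimensionally invariant, then the $\pi$-theorem plus a basis transition for the converse (the paper takes the transition in the opposite direction, but this is immaterial since the transition map is bijective); for part~2 you argue that the level sets of $(\pi_1,\ldots,\pi_r)_{\ast}$ coincide with the $\sim$-classes. The only minor difference is that the paper dispatches part~2 in a single sentence (as an immediate consequence of part~1, surjectivity, and the fact that both the level sets and the basic dimensionally invariant sets partition the space), whereas you spell out the equivalence $\mathbf{x} \sim \mathbf{y} \Leftrightarrow (\pi_1,\ldots,\pi_r)_{\ast}(\mathbf{x}) = (\pi_1,\ldots,\pi_r)_{\ast}(\mathbf{y})$ explicitly --- but this is a stylistic expansion rather than a different route.
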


\begin{proof}
Take any basis $(\pi_1,\ldots,\pi_r)$ of $K[w_1,\ldots,w_n]$.

$\textrm{}$ \\ 1. Take any subset $A \subset V^r$.
\begin{align*}
     & I[w_1,\ldots,w_n]_{(\pi_1,\ldots,\pi_r)_{\ast}{}^{-1}[A]}(x_1,\ldots,x_n) = \textrm{TRUE} \\ 
\iff & (x_1,\ldots,x_n) \in (\pi_1,\ldots,\pi_r)_{\ast}{}^{-1}[A] \\ 
\iff & (\pi_1,\ldots,\pi_r)_{\ast}(x_1,\ldots,x_n) \in A \\ 
\iff & I^{(r)}_A(\pi_1(x_1,\ldots,x_n),\ldots,\pi_r(x_1,\ldots,x_n)) = \textrm{TRUE}
\end{align*}
So $I[w_1,\ldots,w_n]_{(\pi_1,\ldots,\pi_r)_{\ast}{}^{-1}[A]} = I^{(r)}_A(\pi_1(.),\ldots,\pi_r(.))$.

$\textrm{}$ \\ $I^{(r)}_A(\pi_1(.),\ldots,\pi_r(.))$ is a dimensionally invariant relation (by one of the propositions in section 1). Thus $I[w_1,\ldots,w_n]_{(\pi_1,\ldots,\pi_r)_{\ast}{}^{-1}[A]}$ is a dimensionally invariant relation, and so $(\pi_1,\ldots,\pi_r)_{\ast}{}^{-1}[A]$ is a dimensionally invariant set.

$\textrm{}$ \\ Now, take any dimensionally invariant subset $S \in R[w_1,\ldots,w_n]$. Then $I[w_1,\ldots,w_n]_S$ is a dimensionally invariant relation. Hence, by the $\pi$-theorem, there exists an $r$-input dimensionless relation $J$ and a basis $(\psi_1,\ldots,\psi_r)$ of $K[w_1,\ldots,w_n]$ such that
\[ I[w_1,\ldots,w_n]_S = J(\psi_1(.),\ldots,\psi_r(.)). \]
Let
\[ A = \{ (v_1,\ldots,v_r) \in V^r | J(v_1,\ldots,v_r) = \textrm{TRUE} \}. \]
So $J = I^{(r)}_A$.
Then
\[ I[w_1,\ldots,w_n]_S = I^{(r)}_A(\psi_1(.),\ldots,\psi_r(.)). \]
Also, just as above,
\[ I[w_1,\ldots,w_n]_{(\psi_1,\ldots,\psi_r)_{\ast}{}^{-1}[A]} = I^{(r)}_A(\psi_1(.),\ldots,\psi_r(.)). \]
Hence
\[ I[w_1,\ldots,w_n]_S = I[w_1,\ldots,w_n]_{(\psi_1,\ldots,\psi_r)_{\ast}{}^{-1}[A]} \]
and so
\[ S = (\psi_1,\ldots,\psi_r)_{\ast}{}^{-1}[A]. \]

\noindent Now let $(T_1,\ldots,T_r)$ be the transition from $(\psi_1,\ldots,\psi_r)$ to $(\pi_1,\ldots,\pi_r)$. Let
\[ B=(T_1,\ldots,T_r)_{(V^r,V^r)}[A]. \]
$(T_1,\ldots,T_r)_{(V^r,V^r)}$ is bijective, and so
\begin{align*}
S &= (\psi_1,\ldots,\psi_r)_{\ast}{}^{-1}[A] \\ 
  &= (\psi_1,\ldots,\psi_r)_{\ast}{}^{-1}[(T_1,\ldots,T_r)_{(V^r,V^r)}{}^{-1}[B]] \\ 
  &= \left( (T_1,\ldots,T_r)_{(V^r,V^r)} \circ (\psi_1,\ldots,\psi_r)_{\ast} \right)^{-1}[B] \\ 
  &= (\pi_1,\ldots,\pi_r)_{\ast}{}^{-1}[B].
\end{align*}

$\textrm{}$ \\ 2. Follows immediately from part 1, using the facts that: the dimensionally invariant subsets of $w_1 \times \ldots \times w_n$ are precisely those which can be partitioned into basic dimensionally invariant subsets; and every point in $V^r$ has a non-empty pre-image under $(\pi_1,\ldots,\pi_r)_{\ast}$.
\end{proof}

\begin{cor}[``Complete $\pi$-Theorem'', Version 2]
Let $w_1,\ldots,w_n$ be elements of $W$. For any basis $(\pi_1,\ldots,\pi_r)$ of $K[w_1,\ldots,w_n]$,
\begin{enumerate}
\item
the elements of $\overline{R}[w_1,\ldots,w_n]$ are precisely the pre-images of the subsets of $V^r$ under $(\pi_1,\ldots,\pi_r)_{(V^n,V^r)}$
\item
the elements of $\overline{Q}[w_1,\ldots,w_n]$ are precisely the pre-images of the singletons contained in $V^r$ under $(\pi_1,\ldots,\pi_r)_{(V^n,V^r)}$.
\end{enumerate}
\hfill $\triangle$
\end{cor}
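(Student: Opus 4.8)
The plan is to deduce this from Version~1 of the Complete $\pi$-Theorem by transporting the whole picture along the homeomorphism $\alpha_{(s_1,\ldots,s_n)}$. First I would fix any consistent list $(s_1,\ldots,s_n) \in w_1 \times \ldots \times w_n$ (one exists: take a basis $\beta_1,\ldots,\beta_m$ of $\mathrm{span}(w_1,\ldots,w_n)$ inside $W$, choose representatives $x_j \in \beta_j$, and put $s_i = p[\beta_1,\ldots,\beta_m;w_i](x_1,\ldots,x_m)$), and abbreviate $\alpha := \alpha_{(s_1,\ldots,s_n)}$. Recall from the proposition on the dimensional invariance topology of $V^n$ that $\alpha : w_1 \times \ldots \times w_n \to V^n$ is a bijection with $\alpha[S] = \overline{S}$ for every $S \in R[w_1,\ldots,w_n]$.

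The key step is the factorisation
\[ (\pi_1,\ldots,\pi_r)_{\ast} \ = \ (\pi_1,\ldots,\pi_r)_{(V^n,V^r)} \circ \alpha . \]
To prove it, take any $\mathbf{x} = (v_1 s_1,\ldots,v_n s_n)$ in $w_1 \times \ldots \times w_n$ and any $\pi_i$ from the given basis. Since $\pi_i$ is an FCLCF, $\pi_i(\mathbf{x}) = \pi_i(v_1,\ldots,v_n)\,\pi_i(s_1,\ldots,s_n)$; and since $\pi_i \in K[w_1,\ldots,w_n]$ while $(s_1,\ldots,s_n)$ is consistent, exactly the argument used in the proof of the proposition giving sufficiency of dimensional invariance shows $\pi_i(s_1,\ldots,s_n) = 1$ (because $\pi_i(s_1,\ldots,s_n) \in V$ and $s_1^{0_{\mathbb{F}}}\ldots s_n^{0_{\mathbb{F}}} = 1 \in V$). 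Hence $\pi_i(\mathbf{x}) = \pi_i(v_1,\ldots,v_n) = \pi_i(\alpha(\mathbf{x}))$, which is the claimed identity; as a by-product this also confirms that $(\pi_1,\ldots,\pi_r)_{(V^n,V^r)}$ is well-defined, i.e. lands in $V^r$.

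Now I would conclude. By Version~1, $R[w_1,\ldots,w_n] = \{ (\pi_1,\ldots,\pi_r)_{\ast}{}^{-1}[A] : A \subset V^r \}$, and by the homeomorphism this pushes forward to $\overline{R}[w_1,\ldots,w_n] = \{ \alpha[(\pi_1,\ldots,\pi_r)_{\ast}{}^{-1}[A]] : A \subset V^r \}$. Using the factorisation, $(\pi_1,\ldots,\pi_r)_{\ast}{}^{-1}[A] = \alpha^{-1}[(\pi_1,\ldots,\pi_r)_{(V^n,V^r)}{}^{-1}[A]]$, and since $\alpha$ is a bijection onto $V^n$, applying $\alpha$ to this returns $(\pi_1,\ldots,\pi_r)_{(V^n,V^r)}{}^{-1}[A]$; this gives part~1 (both inclusions at once). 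Part~2 follows by running the same argument with Version~1's part~2, together with $\overline{B} = \alpha[B]$ for $B \in Q[w_1,\ldots,w_n]$ and the fact that $\alpha$, being a bijection, carries a singleton's pre-image to a singleton's pre-image; alternatively one can invoke the corollary stating that $\overline{Q}[w_1,\ldots,w_n]$ partitions $V^n$.

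The only thing that requires care is the set-theoretic bookkeeping with images and pre-images through the composition $(\pi_1,\ldots,\pi_r)_{(V^n,V^r)} \circ \alpha$, and this is handled entirely by $\alpha$ being a bijection onto $V^n$; I do not expect a genuine obstacle, since the substantive content (existence of the subset $A$ for each dimensionally invariant set, and independence of the chosen basis) is already carried by Version~1.
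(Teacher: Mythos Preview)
Your proposal is correct and follows essentially the same approach as the paper: fix a consistent list, use the homeomorphism $\alpha_{(s_1,\ldots,s_n)}$, and reduce to Version~1 via the identity $(\pi_1,\ldots,\pi_r)_{\ast} = (\pi_1,\ldots,\pi_r)_{(V^n,V^r)} \circ \alpha$ (equivalently, the paper's $(\pi_1,\ldots,\pi_r)_{\ast} \circ \alpha^{-1} = (\pi_1,\ldots,\pi_r)_{(V^n,V^r)}$), proved by the same computation $\pi_i(v_1 s_1,\ldots,v_n s_n) = \pi_i(v_1,\ldots,v_n)\,\pi_i(s_1,\ldots,s_n) = \pi_i(v_1,\ldots,v_n)$. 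Your version is slightly more explicit about the existence of a consistent list and the reason $\pi_i(s_1,\ldots,s_n)=1$, but the argument is otherwise identical.
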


\begin{proof}
Take any consistent list $(s_1,\ldots,s_n) \in w_1 \times \ldots \times w_n$. Since
\[ \alpha_{(s_1,\ldots,s_n)} : (w_1 \times \ldots \times w_n, R[w_1,\ldots,w_n]) \to (V^n, \overline{R}[w_1\ldots,w_n]) \]
is a homeomorphism and we have the theorem above, it is sufficient just to show that
\[ (\pi_1,\ldots,\pi_r)_{\ast} \circ (\alpha_{(s_1,\ldots,s_n)}{}^{-1}) = (\pi_1,\ldots,\pi_r)_{(V^n,V^r)}. \]
So take any $(v_1,\ldots,v_n) \in V^n$. Then, for each $i$ from $1$ to $r$,
\begin{align*}
\pi_i \left( \alpha_{(s_1,\ldots,s_n)}{}^{-1}(v_1,\ldots,v_n) \right) &= \pi_i(v_1s_1,\ldots,v_ns_n) \\ 
&= \pi_i(v_1,\ldots,v_n) \pi_i(s_1,\ldots,s_n) \\ 
&= \pi_i(v_1,\ldots,v_n).
\end{align*}
So we are done.
\end{proof}

\newpage
\subsection*{References}

\textrm{[}1\textrm{]} Roberts, S.F., 1985. Applications of the theory of meaningfulness to psychology. \emph{Journal of Mathematical Psychology} \textbf{29}, pp.311--332.

$\textrm{}$ \\ \textrm{[}2\textrm{]} Sahal, D., 1976. On the conception and measurement of trade-off in engineering systems: A case study of the aircraft design process. \emph{Technological Forecasting and Social Change} \textbf{8}, pp.371--384.

$\textrm{}$ \\ \textrm{[}3\textrm{]} Barenblatt, G.I., 2003. \emph{Scaling}. Cambridge University Press, Cambridge.

$\textrm{}$ \\ \textrm{[}4\textrm{]} Curtis, W.D., Logan, J.D., Parker, W.A., 1982. Dimensional analysis and the pi theorem. \emph{Linear Algebra and its Applications} \textbf{47}, pp.117--126.

$\textrm{}$ \\ \textrm{[}5\textrm{]} Brand, L, 1957. The Pi theorem of dimensional analysis. \emph{Archive for Rational Mechanics and Analysis} \textbf{1}, pp.35--45.

\end{document}